\definecolor{darkred}{rgb}{0.5,0,0}
\definecolor{darkblue}{rgb}{0,0,0.5}
\definecolor{darkgreen}{rgb}{0,0.5,0}
\mathchardef\mhyphen="2D
\def\makeuppercase#1{
	\expandafter\newcommand\csname cal#1\endcsname{\mathcal{#1}}
	\expandafter\newcommand\csname bf#1\endcsname{\textbf{#1}}
}
\def\makelowercase#1{
	\expandafter\newcommand\csname bf#1\endcsname{\textbf{#1}}
}
\newcounter{char}
\edef\letter{\alph{char}}
\edef\Letter{\Alph{char}}
\newcommand{\linesref}[1]{line~\ref{#1}}
\def\makeuppercase#1{
	\expandafter\newcommand\csname cal#1\endcsname{\mathcal{#1}}
	\expandafter\newcommand\csname bf#1\endcsname{\textbf{#1}}
}
\newcommand{\E}{\mathbb{E}}
\newcommand{\F}{\mathbb{F}}
\newcommand{\N}{\mathbb{N}}
\newcommand{\p}[1]{\Pr\left[#1\right]}
\renewcommand{\vec}[1]{\mathbf{#1}}
\newcommand{\wt}{\mathrm{wt}}
\newcommand{\tmo}[1]{\tilde{\mathcal{O}}\left({#1}\right)}
\newcommand{\tmt}[1]{\tilde{\Theta}\left({#1}\right)}
\newcommand{\tmom}[1]{\tilde{\Omega}\left({#1}\right)}
\newcommand{\NNSolver}{\textsc{Closest-Pair}}
\newcommand{\NNP}{\mathcal{CP}_{d,\lambda,\gamma}}
\newlength{\strutdepth}%
\newcommand{\h}{{H}}
\newcommand{\hi}{{H^{-1}}}
\newcommand{\block}[3]{(#1)_{B_{#2,#3}}}
\newcommand{\prb}{p}
\newcommand{\prw}{q}
\newcommand{\notes}[3]{
	\noindent{%
		\color{#1}{[#3]}\color{#1}}%
	\strut\vadjust{\kern-\strutdepth%
		\vtop to \strutdepth{%
			\baselineskip\strutdepth%
			\vss\llap{{\large\color{#1}\textbf{#2}\quad\color{black}}}\null%
		}%
	}%
}
\newtheorem{lemma}{Lemma}
\newtheorem{corollary}{Corollary}
\newtheorem{definition}{Definition}
\newtheorem{theorem}{Theorem}
\pgfplotsset{compat=1.3}
\title{A Faster Algorithm for Finding Closest Pairs in Hamming Metric}
\author[1]{Andre Esser}
\author[2]{Robert Kübler}
\author[3]{Floyd Zweydinger}
\affil[1]{Cryptography Research Center, Technology Innovation Institute, Abu Dhabi, UAE, \texttt{andre.esser@tii.ae}}
\affil[2]{Medion AG Essen, Germany, \texttt{robert.kuebler@rub.de}}
\affil[3]{Ruhr University Bochum, Germany, \texttt{floyd.zweydinger@rub.de}}
\date{}
\begin{document}

\maketitle

\begin{abstract}
We study the Closest Pair Problem in Hamming metric, which asks to find the pair with the smallest Hamming distance in a collection of binary vectors. We give a new randomized algorithm for the problem on uniformly random input outperforming previous approaches whenever the dimension of input points is small compared to the dataset size. For moderate to large dimensions, our algorithm matches the time complexity of the previously best-known locality sensitive hashing based algorithms. Technically our algorithm follows similar design principles as Dubiner (IEEE Trans. Inf. Theory 2010) and May-Ozerov (Eurocrypt 2015). Besides improving the time complexity in the aforementioned areas, we significantly simplify the analysis of these previous works. We give a modular analysis, which allows us to investigate the performance of the algorithm also on non-uniform input distributions. Furthermore, we give a proof of concept implementation of our algorithm which performs well in comparison to a quadratic search baseline. This is the first step towards answering an open question raised by May and Ozerov regarding the practicability of algorithms following these design principles. 
\end{abstract}

\section{Introduction}
Finding closest pairs in a given dataset of binary points is a fundamental problem in theoretical computer sciences with numerous applications in data science, machine learning, computer vision, cryptography, and many others. 

Image data for example is often represented via compact binary codes to allow for efficient closest pair search in applications like similarity search in images or facial recognition systems \cite{calonder2010brief,lu2015learning,strecha2011ldahash}. The usage of binary codes also allows decoding the represented data to common codewords. Here, the most efficient algorithms known for decoding such random binary linear codes also heavily benefit from improved  algorithms for the Closest Pair Problem \cite{EC:MayOze15,PQCRYPTO:BotMay18}.  Another common application lies in the field of bioinformatics, where the analysis of genomes involves closest pair search on large datasets to identify the most correlated genetic markers \cite{marchini2005genome,musani2007detection}.

To be more precise, the Closest Pair Problem asks to find the pair of vectors with the minimal Hamming distance among $n$ given binary vectors. While the general version of this problem does not make any restrictions on the distribution of input points, several settings imply a uniform distribution of dataset elements \cite{PQCRYPTO:BotMay18,marchini2005genome,musani2007detection,EC:MayOze15}. Usually, in such settings, there is a planted pair, which attains relative distance $\gamma\in[0,\frac{1}{2}]$, which has to be found. This uniform version is also known as the \emph{light bulb problem}\cite{valiant1988functionality}.
The problem can be solved in time linearly in the dataset size\footnote{here we ignore polylogarithmic factors in the dataset size} as long as the dimension of vectors is constant \cite{bentley1980multidimensional,khuller1995simple}. As soon as the dimension is non-constant an effect occurs known as \emph{curse of dimensionality}, which lets the problem become much harder. 

The most common framework to assess the problem is based on \emph{locality-sensitive hashing} (LSH), whose research was initiated in the pioneering work of Indyk and Motwani \cite{STOC:IndMot98}. Roughly speaking, a locality-sensitive hash function is more likely to hash points that are close to each other to the same value, rather than points that are far apart. To solve the Closest Pair Problem leveraging an LSH family one chooses a random hash function of the family and computes the hash value of all points in the dataset. In a next step, one computes the pairwise distance only for those pairs, hashing to the same value. This process is then repeated for different hash functions until the closest pair is found. The initial algorithm by Indyk-Motwani achieves a time complexity of $n^{\log_2(\frac{2}{1-\gamma})}$. In general, a time lower bound of $n^{\frac{1}{1-\gamma}}$ is known for LSH based algorithms \cite{dubiner2010bucketing,motwani2006lower}. In \cite{dubiner2010bucketing} Dubiner also gives an abstract idea of an algorithm achieving this lower bound.\footnote{A precise algorithmic description and a proof of the running time in the case where the vector length is restricted (referred to as \emph{limited amount of data case} in his work) is missing.} Later May and Ozerov \cite{EC:MayOze15} gave the first concrete algorithmic description following similar design principles, also achieving the mentioned lower bound. Additionally, current data-dependent hashing schemes \cite{andoni2015optimal}, where the hash function depends also on the actual points in the dataset, improve on the initial idea by Indyk-Motwani and also match the time lower bound of  \cite{dubiner2010bucketing,motwani2006lower}.

In the uniform setting Valiant \cite{valiant2012finding} was able to circumvent the lower bound by leveraging fast matrix multiplication and hence breaking out of the LSH framework to give an algorithm that runs in time  $n^{1.63}\mathrm{poly}(d)$. Remarkably, the complexity exponent of Valiant's algorithm does not depend on the relative distance $\gamma$ at all. Later this bound was improved to $n^{1.58}\mathrm{poly}(d)$ by Karpa et al. \cite{SODA:KarKasKoh16} and simplified in an elegant algorithm by Alman \cite{DBLP:conf/soda/Alman19} achieving the same complexity.
\medskip 

All mentioned algorithms have in common, that they assume a dimension of $d=c(n)\log(n)$, where $c(n)$ is at least a big constant. The explicit size of those constants is usually not stated, instead an asymptotic argument yields their existence. Moreover,  the results by \cite{andoni2015optimal, dubiner2010bucketing, valiant2012finding} for instance assume even larger dimensions where $\frac{1}{c(n)}=o(1)$ . Here, the algorithm by May-Ozerov forms an exception by being applicable for any $c(n)\geq \frac{1}{1-H(\frac{\gamma}{2})}$, where $H(\cdot)$ denotes the binary entropy function. Nevertheless, the mentioned lower bound is only achieved for $c(n)$ approaching infinity.
Recently, Xie, Xu and Xu \cite{xie2019new} proposed a new algorithm based on decoding the points of the data set according to some random code, exploiting that close vectors are more likely to be decoded to the same word. Their algorithm is also applicable for any $c(n)$ that allows to bound the number of pairs attaining relative distance $\gamma$ to a constant number with high probability. The authors are able to derandomize their approach and, thus, obtain the fastest known deterministic algorithm for small constants $c(n)$. However, if one also considers probabilistic procedures, their method is inferior to the one by May-Ozerov.

\subsection{Our Contribution}

We design a randomized algorithm, which achieves the best-known running time for solving the Closest Pair Problem on uniformly random input, when the dimension $d$ is small, which means $c(n)$ being a small constant. Additionally, our algorithm matches the running time of the best known LSH algorithms for larger values of $c(n)$ and still matches the time lower bound for LSH based schemes if $\frac{1}{c(n)}=o(1)$. To quantify we give in \cref{fig:cmp-mo-new} the achieved runtime exponent for $c(n)\in\{1.2,2,4\}$ of our algorithm in comparison to May-Ozerov. As indicated by the graphics, our algorithm can be seen as a natural extension of the May-Ozerov algorithm to higher distances. Moreover, we show that for large distances our algorithm is indeed optimal. Note that apart from the May-Ozerov algorithm \emph{none} of the previously mentioned algorithms is applicable for those choices of $c(n)$. A detailed comparison to the result of May and Ozerov is given right after \cref{thm:main}. 

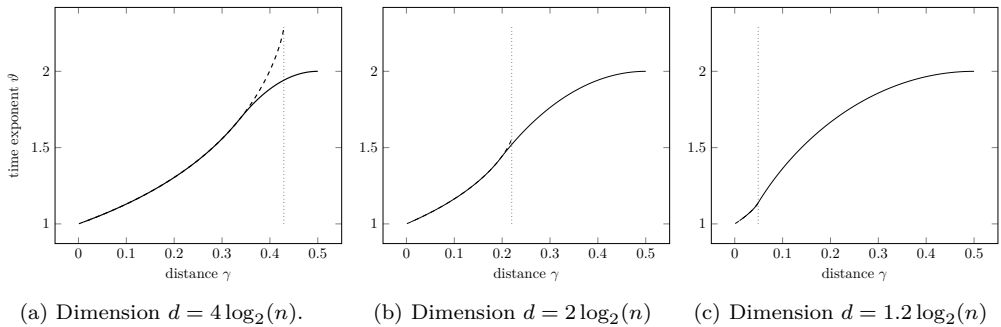
\begin{figure}[t]
	\begin{subfigure}{.3\textwidth}
		
		\begin{tikzpicture}[scale=0.55]
			\begin{axis}[
				y tick label style={
					/pgf/number format/.cd,
					fixed,
					precision=3,
					/tikz/.cd
				},
				x tick label style={
					/pgf/number format/.cd,
					fixed,
					1000 sep={},
					precision=2,
					/tikz/.cd
				},
				xlabel={distance $\gamma$},
				ylabel={time exponent $\vartheta$},
				legend cell align={left},
				]
				
				\addplot [domain=1:2.29, samples=500,mark=none,black,	mark size=1.8, dotted, opacity=0.8] (0.429,x)  node[pos=0.5,yshift=4pt,opacity=0.8,xshift=-55pt] {};
				
				\pgfplotstableread{plots/cmp-new-025}
				\cTradeOff
				\addplot[color=black, thick] table[x = W,y=T] from \cTradeOff ;
				
				\pgfplotstableread{plots/cmp-mo-025}
				\cTradeOff
				\addplot[color=black,dashed, thick] table[x = W,y=T] from \cTradeOff ;
				
			\end{axis}	
			
		\end{tikzpicture}
		\caption{Dimension $d=4\log_2(n)$.}
		\label{fig:cmp-025}
	\end{subfigure}\hspace{0.3cm}
	\begin{subfigure}{.3\textwidth}
		
		\begin{tikzpicture}[scale=0.55]
			\begin{axis}[
				y tick label style={
					/pgf/number format/.cd,
					fixed,
					precision=3,
					/tikz/.cd
				},
				x tick label style={
					/pgf/number format/.cd,
					fixed,
					1000 sep={},
					precision=2,
					/tikz/.cd
				},
				xlabel={distance $\gamma$},
				ylabel={},
				legend cell align={left},
				]
				
			\addplot [domain=1:2.29, samples=500,mark=none,black,	mark size=1.8, dotted, opacity=0.8] (0.22,x)  node[pos=0.5,yshift=4pt,opacity=0.8,xshift=-55pt] {};
			
			\pgfplotstableread{plots/cmp-new-05}
			\cTradeOff
			\addplot[color=black] table[x = W,y=T] from \cTradeOff ;
			
			\pgfplotstableread{plots/cmp-mo-05}
			\cTradeOff
			\addplot[color=black,dashed] table[x = W,y=T] from \cTradeOff ;

			\end{axis}	
			
		\end{tikzpicture}
		
		\caption{Dimension $d=2\log_2(n)$}
		\label{fig:cmp-05}
	\end{subfigure}\hspace{0cm}
\begin{subfigure}{0.3\textwidth}
	
	\begin{tikzpicture}[scale=0.55]
		\begin{axis}[
			y tick label style={
				/pgf/number format/.cd,
				fixed,
				precision=3,
				/tikz/.cd
			},
			x tick label style={
				/pgf/number format/.cd,
				fixed,
				1000 sep={},
				precision=2,
				/tikz/.cd
			},
			xlabel={distance $\gamma$},
			ylabel={},
			legend cell align={left},
			]
			
			\addplot [domain=1:2.29, samples=500,mark=none,black,	mark size=1.8, dotted, opacity=0.8] (0.049,x)  node[pos=0.5,yshift=4pt,opacity=0.8,xshift=-55pt] {};
			
			\pgfplotstableread{plots/cmp-new-083}
			\cTradeOff
			\addplot[color=black] table[x = W,y=T] from \cTradeOff ;
			
			\pgfplotstableread{plots/cmp-mo-083}
			\cTradeOff
			\addplot[color=black,dashed] table[x = W,y=T] from \cTradeOff ;

		\end{axis}	
		
	\end{tikzpicture}
	
	\caption{Dimension $d=1.2\log_2(n)$}
	\label{fig:cmp-083}
\end{subfigure}
	\caption{Time complexity exponent $\vartheta$ as a function of the relative distance $\gamma$ of the closest pair for different dimensions. The running time is of the form $n^\vartheta\cdot \textrm{poly}(d)$, where the dashed line represents May-Ozerov's algorithm and the solid line depicts the exponent of our new algorithm. The dotted line gives the maximal $\gamma$ for which the algorithm by May-Ozerov is still applicable.}
	\label{fig:cmp-mo-new}
\end{figure}

Our improvements over previous work lie in the high density regime, which implies multiple solutions to the Closest Vector Problem. Since the distance alone does not allow to distinguish the planted pair in such cases at least a non-negligible fraction of those pairs needs to be reported, to find the planted pair. The relevance of this setting is mostly given by cryptographic \cite{EC:MayOze15,SODA:BDGL16} and coding-theoretic \cite{hirose2016may,gueye2017generalization,both2018decoding} applications, precisely the decoding of linear codes. Here, the searched error-vector has known weight and is usually constructed in a tree-wise meet-in-the-middle fashion. Even though the error-vector is usually unique the tree-wise decomposition of the problem introduces multiple solution candidates, such that the lists in the tree  can even hold exponentially many pairs with relative distance smaller than $\gamma$. However, in such settings, only the elements attaining relative distance $\gamma$  can possibly sum to the searched error-vector. 
In the algorithm of Both and May \cite{both2018decoding}, which is the fastest known for decoding random binary linear codes, the authors had to define naive fallback routines for the high density case, for which the May-Ozerov algorithm is not applicable. Here our result allows at least for a unified analysis of the algorithm without the need of fallback routines and in the best case leads to runtime improvements.
Also, the generalization of the May-Ozerov nearest neighbor algorithm to $\F_q$ by Hirose \cite{hirose2016may} suffers similar limitations regarding the high density regime, while also forming the basis for the fastest known decoding algorithm for random linear codes over $\F_q$ \cite{gueye2017generalization}. 

Technically our algorithm follows similar design principles as \cite{dubiner2010bucketing,EC:MayOze15}. At its core, these algorithms group the elements of the given datasets recursively into buckets according to some criterion, which fulfills properties that are similar to those of locality-sensitive hash functions. As the buckets in the recursion are decreasing in size, at the end of the recursion they become small enough to compute the pairwise distance of all contained elements naively.

In contrast to previous works, we exchange the used bucket criteria, which allows us to significantly simplify the algorithms' analysis as well as improve for the mentioned parameter regimes. Also, our approach is applicable for any $c(n)$, thus we are able to remove the restriction $c(n)\geq \frac{1}{1-H(\frac{\gamma}{2})}$.

Following May-Ozerov and Dubiner, we study the bichromatic version of the Closest Pair Problem, which takes as input two datasets rather than one and the goal is to find the closest pair between those given datasets. Obviously, there exists a randomized reduction between the Closest Pair Problem and its bichromatic version, but our algorithm can also be easily adapted to the single dataset case. However, May and Ozerov require the elements within each dataset to be pairwise independent of each other, as a minor contribution we get rid of this restriction, too.

Also, we investigate the algorithms' performance on different input distributions. Therefore we give a modular analysis, which allows for an easy exchange of dataset distribution as well as the choice of bucketing criterion. We also give numerical upper bounds for the algorithm's complexity exponent on some exemplary input distributions. These examples suggest that the chosen criterion is well suited as long as the distance between input elements concentrates around $\frac{d}{2}$ (as in the case of random input lists), while being non-optimal as soon as the expected distance decreases.

We also address an open research question regarding the practical applicability of algorithms following the design of \cite{dubiner2010bucketing,EC:MayOze15} raised by May and Ozerov.
As their algorithm inherits a huge polynomial overhead in time and space, they left it as an open problem to give a more practical algorithm following a similar design. While our analysis first suggests an equally high overhead, we are able to give an efficient implementation of our algorithm, which requires in addition to the input dataset only constant space. Also, our practical experiments show that most of the overhead of our algorithm is an artifact of the analysis and can be circumvented in practice so that our algorithm performs well compared to a quadratic search baseline.


\section{Preliminaries}
\subsection{Notation}

For $a,b\in\N$, $a\le b$ we denote $\left[a, b\right]:=\{a, a+1, \ldots, b-1, b\}$. In particular, let $[b]:=[1,b]$. For a vector $\vec{v}\in\F_2^d$ and $I\in [d]$ let $\vec{v}_I$ be the projection of $\vec{v}$ onto the coordinates indexed by $I$, i.e. for $\vec v=(v_1, v_2, \ldots, v_d)$ and $I=\{i_1, i_2, \ldots, i_k\}$ we have $\vec{v}_I=(v_{i_1}, \ldots, v_{i_k})\in\F_2^k$. We denote the uniform distribution on $\F_2^d$ as $\mathcal{U}\left(\F_2 ^d\right)$. We define $f(n)=\tmo{g(n)} :\Leftrightarrow \exists i\in\mathbb{N}\colon f(n)=\mathcal{O}\left(g(n)\cdot \log^i(g(n))\right)$, i.e. the tilde additionally suppresses polylogarithmic factors in comparison to the standard Landau notation $\mathcal{O}$.

Furthermore, we consider all logarithms having base 2. Define the binary entropy function as $\h(x)=-x\log (x) - (1-x)\log (1-x)$ for $x\in(0,1)$, and additionally  $\h(0)=\h(1):=0$. Using this together with Stirling's formula $n!=\Theta\left(\sqrt{2\pi n}\left(\frac{n}{e}\right)^n\right)$ we obtain $\binom{n}{\gamma n}=\tilde{\Theta}\left(2^{\h\left(\gamma\right)n}\right)$. We additionally define $\hi\colon[0, 1]\rightarrow[0, \frac{1}{2}]$ to be the inverse of the left branch of $\h$.

\subsection{Closest Pair Definition}

In this work, we consider the Bichromatic Closest Pair Problem in Hamming metric. Here, the inputs are two lists of equal size containing elements drawn uniformly at random from $\F_2^d$ plus a planted pair, whose Hamming distance is $\gamma d$ for some known $\gamma$. More formally, we state the problem in the following definition. To allow for easy comparison to the result of May-Ozerov, we follow their notation using the dimension as the primary difficulty parameter. Thus we let the list sizes be $n:=2^{\lambda d}$, which means $\lambda = \frac{1}{c(n)}$, where $d=c(n)\log n$.

\begin{definition}[Bichromatic Closest Pair Problem]
	Let $d\in\N$, $\gamma \in \left[0,\frac{1}{2}\right]$ and $\lambda \in(0,1]$. Let $L_1=(\vec v_i)_{i\in [2^{{\lambda d}}]},L_2=(\vec w_i)_{i\in [2^{{\lambda d}}]}\in\left(\F_2^d\right)^{2^{\lambda d}}$ be two lists containing elements uniformly drawn at random, together with a distinguished pair $(\vec{x},\vec{y})\in L_1\times L_2$ with $\wt(\vec{x}+\vec{y})=\gamma d$. We further assume that for each $i,j$ the vectors $\vec v_i$ and $\vec w_j$ are pairwise stochastically independent. The \emph{Closest Pair Problem} $\NNP$ asks to find this \emph{closest pair }$(\vec{x},\vec{y})$ given $L_1,L_2$ and the weight parameter $\gamma$. We call $(\vec x, \vec y)$ the \emph{solution} of the $\NNP$ problem.
	\label{def:NNS}
\end{definition}

First, note that $\lambda\leq1$ is not a real restriction since for $\lambda>1$ the lists must contain duplicates, which can be safely removed, giving us a problem instance with $\lambda\le 1$.
We also consider the Closest Pair Problem on input lists whose elements are distributed according to some distribution $\calD$ different from the uniform one used in \cref{def:NNS}. To indicate this, we refer to the  \emph{$\NNP$ over distribution $\calD$}. Note that in this case, the meaningful upper bound for $\lambda$ is the entropy of $\calD$.

Technically speaking, it is also not necessary to know the value of $\gamma$, as the time complexity of appropriate algorithms to solve the $\NNP$ problem is solely increasing in $\gamma$. Thus if $\gamma$ is unknown, one would apply the algorithm for each $\gamma d=0,1,2,\ldots$ until the solution is found, which results at most in an overhead polynomial in $d$.
 
It is well known, that any LSH based algorithm solving the problem of \cref{def:NNS} with non-negligible probability needs at least time complexity $|L_1|^{\frac{1}{1-\gamma}}=2^{{\frac{\lambda d}{1-\gamma}}}$ \cite{dubiner2010bucketing,motwani2006lower}. However, this lower bound assumes the promised pair to be uniquely distinguishable from all other pairs in $L_1\times L_2$. Obviously, if the relation of $\gamma$ and $\lambda$ lets us expect more than the promised pair of distance $\gamma d$ in the input lists, an algorithm solving the Closest Pair Problem needs to find all (or at least a non-negligible fraction) of these closest pairs.\footnote{Note that in such a scenario the searched $(\vec{x},\vec{y})$ is probably not the pair with the smallest Hamming distance, however, we still refer to elements attaining Hamming distance $\gamma d$ as \emph{closest pairs}.} Hence, if the input lists contain $E$ closest pairs the time complexity of any algorithm solving the problem is lower bounded by
\begin{align*}
\tilde{\Omega}\left(\max(2^\frac{\lambda d}{1-\gamma},\;E)\right)
\end{align*}

Let $(\vec{v}, \vec{w})\in L_1\times L_2\setminus \{(\vec{x}, \vec{y})\}$ be arbitrary list elements. If the elements are chosen independently and uniformly at random, as stated in \cref{def:NNS} we expect $E$ to be of size
\begin{align*}
	\E[E]&=(|L_1\times L_2|-1)\cdot \p{\wt(\vec{v}+\vec{w})=\gamma d} + \underbrace{1}_{\textrm{from } (\vec{x}, \vec{y})}\\
	&=\left(2^{2\lambda d}-1\right)\cdot\frac{\binom{d}{\gamma d}}{2^d}+1\\
		&= \tmt{2^{(2\lambda+\h(\gamma)-1)d}}\enspace,	
\end{align*}

and, thus, the time complexity to solve the $\NNP$ is lower bounded by
\begin{align}
	T_{\textrm{opt}}=\tmom{\max\left(2^{\frac{\lambda d}{1-\gamma}},\;2^{(2\lambda+\h(\gamma)-1)d}\right)}\enspace.
		\label{eqn:Topt-uniform}
\end{align}
\section{Our new Algorithm}
\label{sec:new-alg}
%

Our algorithm groups the input elements according to some criterion into several buckets, each one representing a new closest pair instance with smaller list size. We then apply this bucketing procedure recursively until the buckets contain few enough elements to eventually solve the Closest Pair Problem represented by them via a naive quadratic search algorithm, the exhaustive search. 

As a bucketing criterion, we choose the weight of the vectors after adding a randomly drawn vector $\vec z$ from $\F_2^d$. Thus, each bucket is represented by a vector $\vec z$ and only those elements $\vec v$ are added to the bucket, which satisfy $\wt(\vec v+\vec z)=\delta d$, where $\delta$ is determined later.

\begin{figure}
	\centering
	\usetikzlibrary{patterns,decorations.pathreplacing,arrows.meta}

\begin{tikzpicture}

\pattern[pattern=north west lines,opacity=0.7] (-1.5,9.1) -- (-1.2,9.1) -- (-1.2,9.7) -- (-1.5,9.7) --cycle;
\draw (-1.5,9.1) -- (-1.5,9.7)  -- (-1.2,9.7)  -- (-1.2,9.1) -- (-1.5,9.1); 
\node[align=flush left,text width=4cm] at (1,9.4) {\scriptsize corresponds to\\relative weight $ \delta$ };    

\draw [draw](0,5) -- (-1.5,5) -- (-1.5,7.5) -- (0,7.5) -- (0,5);
\draw[opacity=0.5] (-1.2,7.5) --  (-1.2,5);
\draw[opacity=0.5] (-0.9,7.5) -- (-0.9,5);
\draw[opacity=0.5] (-0.3,7.5) -- (-0.3,5);

\draw [line width=1pt](-1.5,5.4) -- (0,5.4);
\node at (-0.7,5.6) {$\mathbf x$};


\draw [decorate,decoration={brace,amplitude=3pt},xshift=0pt,yshift=0pt]
(-1.5,7.6) -- (-1.2,7.6) node [black,midway,yshift=+0.3cm] 
{\footnotesize $k = \frac{d}{r}$};

\draw [decorate,decoration={brace,amplitude=3pt},xshift=0pt,yshift=0pt]
(-1.6,5) -- (-1.6,7.5) node [black,midway,xshift=-0.4cm] 
{\footnotesize $2^{\lambda d}$};

\draw [decorate,decoration={brace,amplitude=3pt},xshift=0pt,yshift=0pt]
(0,4.9) -- (-1.5,4.9) node [black,midway,yshift=-0.3cm] 
{\footnotesize $d$};

\path (-0.2,6.2) -- node[auto=false]{\dots} (-0.9,6.2);
\path (1.5,6.2) -- node[auto=false]{\dots} (0.8,6.2);

\draw [draw](1.7,5) -- (0.2,5) -- (0.2,7.5) -- (1.7,7.5) -- (1.7,5);
\draw[opacity=0.5] (0.5,7.5) -- (0.5,5);
\draw[opacity=0.5] (0.8,7.5) -- (0.8,5);
\draw[opacity=0.5] (1.4,7.5) -- (1.4,5);

\draw [line width=1pt](0.2,6.9) -- (1.7,6.9);
\node at (1,7.1) {$\mathbf y$};

\draw [draw](4.1,8.4) -- (2.6,8.4) -- (2.6,10.4) -- (4.1,10.4) -- (4.1,8.4);
\draw[opacity=0.5] (2.9,10.4) -- (2.9,8.4);
\draw[opacity=0.5] (3.2,10.4) -- (3.2,8.4);
\draw[opacity=0.5] (3.8,10.4) -- (3.8,8.4);
\pattern[pattern=north west lines,opacity=0.7] (2.6,8.4) --(2.9,8.4) -- (2.9,10.4) -- (2.6,10.4) --cycle;
\draw [draw](5.8,8.4) -- (4.3,8.4) -- (4.3,10.4) -- (5.8,10.4) -- (5.8,8.4);
\draw[opacity=0.5] (4.6,10.4) -- (4.6,8.4);
\draw[opacity=0.5] (4.9,10.4) -- (4.9,8.4);
\draw[opacity=0.5] (5.5,10.4) -- (5.5,8.4);
\pattern[pattern=north west lines,opacity=0.7] (4.3,8.4) --(4.6,8.4) -- (4.6,10.4) -- (4.3,10.4) --cycle;

\draw [draw](4.1,6) -- (2.6,6) -- (2.6,8) -- (4.1,8) -- (4.1,6);
\draw[opacity=0.5] (2.9,8) -- (2.9,6);
\draw[opacity=0.5] (3.2,8) -- (3.2,6);
\draw[opacity=0.5] (3.8,8) -- (3.8,6);
\path (3.8,7) -- node[auto=false]{\dots} (3.2,7);
\pattern[pattern=north west lines,opacity=0.7] (2.6,6) --(2.9,6) -- (2.9,8) -- (2.6,8) --cycle;
\draw [draw](5.8,6) -- (4.3,6) -- (4.3,8) -- (5.8,8) -- (5.8,6);
\draw[opacity=0.5] (4.6,8) -- (4.6,6);
\draw[opacity=0.5] (4.9,8) -- (4.9,6);
\draw[opacity=0.5] (5.5,8) -- (5.5,6);
\path (5.5,7) -- node[auto=false]{\dots} (4.9,7);
\pattern[pattern=north west lines,opacity=0.7] (4.3,6) --(4.6,6) -- (4.6,8) -- (4.3,8) --cycle;

\draw [draw](4.1,2.8) -- (2.6,2.8) -- (2.6,4.8) -- (4.1,4.8) -- (4.1,2.8);
\draw[opacity=0.5] (2.9,4.8) -- (2.9,2.8);
\draw[opacity=0.5] (3.2,4.8) -- (3.2,2.8);
\draw[opacity=0.5] (3.8,4.8) -- (3.8,2.8);
\path (3.8,3.8) -- node[auto=false]{\dots} (3.2,3.8);
\pattern[pattern=north west lines,opacity=0.7] (2.6,2.8) --(2.9,2.8) -- (2.9,4.8) -- (2.6,4.8) --cycle;
\draw [draw](5.8,2.8) -- (4.3,2.8) -- (4.3,4.8) -- (5.8,4.8) -- (5.8,2.8);
\draw[opacity=0.5] (4.6,4.8) -- (4.6,2.8);
\draw[opacity=0.5] (4.9,4.8) -- (4.9,2.8);
\draw[opacity=0.5] (5.5,4.8) -- (5.5,2.8);
\path(5.5,3.8) -- node[auto=false]{\dots} (4.9,3.8);
\pattern[pattern=north west lines,opacity=0.7] (4.3,2.8) --(4.6,2.8) -- (4.6,4.8) -- (4.3,4.8) --cycle;

\draw [-latex](1.7,6.4) -- (2.6,9.2) node[sloped, midway, above, xshift=0.4cm] {${\scriptstyle {\scriptscriptstyle +} \mathbf{z}^{(1)}_1 }$};
\draw [-latex](1.7,6.4) -- (2.6,7) node[sloped, midway, above, xshift=0.3cm] {${\scriptstyle {\scriptscriptstyle +} \mathbf{z}^{(1)}_2 }$};
\draw [-latex](1.7,6.4) -- (2.6,4.1) node[sloped, midway, above, xshift=-0.2cm] {${\scriptstyle {\scriptscriptstyle +} \mathbf{z}^{(1)}_N }$};

\path (4.2,5.9) -- node[auto=false]{\vdots} (4.2,4.9);

\draw [line width=1pt](2.6,8.8) -- (4.1,8.8);
\node at (3.4,9) {$\mathbf x$};
\draw [line width=1pt](4.3,9.6) -- (5.8,9.6);
\node at (5.1,9.8) {$\mathbf y$};

\draw [draw](8.2,10.3) -- (6.7,10.3) -- (6.7,11.4) -- (8.2,11.4) -- (8.2,10.3);
\draw[opacity=0.5] (7,11.4) -- (7,10.3);
\draw [opacity=0.5](7.3,11.4) -- (7.3,10.3);
\draw[opacity=0.5] (7.9,11.4) -- (7.9,10.3);
\path (7.9,10.8) -- node[auto=false]{\dots} (7.3,10.8);
\pattern[pattern=north west lines,opacity=0.7] (7,10.3) --(7.3,10.3) -- (7.3,11.4) -- (7,11.4) --cycle;
\draw [draw](9.9,10.3) -- (8.4,10.3) -- (8.4,11.4) -- (9.9,11.4) -- (9.9,10.3);
\draw[opacity=0.5] (8.7,11.4) -- (8.7,10.3);
\draw[opacity=0.5] (9,11.4) -- (9,10.3);
\draw[opacity=0.5] (9.6,11.4) -- (9.6,10.3);
\path (9.6,10.8) -- node[auto=false]{\dots} (9,10.8);
\pattern[pattern=north west lines,opacity=0.7] (8.7,10.3) --(9,10.3) -- (9,11.4) -- (8.7,11.4) --cycle;

\draw [draw](8.2,9) -- (6.7,9) -- (6.7,10.1) -- (8.2,10.1) -- (8.2,9);
\draw[opacity=0.5] (7,10.1) -- (7,9);
\draw[opacity=0.5] (7.3,10.1) -- (7.3,9);
\draw[opacity=0.5] (7.9,10.1) -- (7.9,9);
\path (7.9,9.5) -- node[auto=false]{\dots} (7.3,9.5);
\pattern[pattern=north west lines,opacity=0.7] (7,9) --(7.3,9) -- (7.3,10.1) -- (7,10.1) --cycle;
\draw [draw](9.9,9) -- (8.4,9) -- (8.4,10.1) -- (9.9,10.1) -- (9.9,9);
\draw[opacity=0.5] (8.7,10.1) -- (8.7,9);
\draw[opacity=0.5] (9,10.1) -- (9,9);
\draw[opacity=0.5] (9.6,10.1) -- (9.6,9);
\path (9.6,9.5) -- node[auto=false]{\dots} (9,9.5);
\pattern[pattern=north west lines,opacity=0.7] (8.7,9) --(9,9) -- (9,10.1) -- (8.7,10.1) --cycle;

\draw [draw](8.2,7.3) -- (6.7,7.3) -- (6.7,8.4) -- (8.2,8.4) -- (8.2,7.3);
\draw[opacity=0.5] (7,8.4) -- (7,7.3);
\draw[opacity=0.5] (7.3,8.4) -- (7.3,7.3);
\draw[opacity=0.5] (7.9,8.4) -- (7.9,7.3);
\pattern[pattern=north west lines,opacity=0.7] (7,7.3) --(7.3,7.3) -- (7.3,8.4) -- (7,8.4) --cycle;
\draw [draw](9.9,7.3) -- (8.4,7.3) -- (8.4,8.4) -- (9.9,8.4) -- (9.9,7.3);
\draw[opacity=0.5] (8.7,8.4) -- (8.7,7.3);
\draw[opacity=0.5] (9,8.4) -- (9,7.3);
\draw[opacity=0.5] (9.6,8.4) -- (9.6,7.3);
\pattern[pattern=north west lines,opacity=0.7] (8.7,7.3) --(9,7.3) -- (9,8.4) -- (8.7,8.4) --cycle;

\draw [-latex](5.8,9) -- (6.7,10.8) node[sloped, midway, above, xshift=0.4cm] {${\scriptstyle {\scriptscriptstyle +} \mathbf{z}^{(2)}_1 }$};
\draw [-latex](5.8,9) -- (6.7,9.7) node[sloped, midway, below, xshift=-0.1cm] {${\scriptstyle {\scriptscriptstyle +} \mathbf{z}^{(2)}_2 }$};
\draw [-latex](5.8,9) -- (6.7,7.9) node[sloped, midway, below, xshift=0.35cm] {${\scriptstyle {\scriptscriptstyle +} \mathbf{z}^{(2)}_N }$};

\path (8.3,9.2) -- node[auto=false]{\vdots} (8.3,8.4);

\draw [line width=1pt](6.7,8) -- (8.2,8);
\node at (7.5,8.2) {$\mathbf x$};
\draw [line width=1pt](8.4,7.6) -- (9.9,7.6);
\node at (9.2,7.8) {$\mathbf y$};


\path (8.4,5.1) -- node[auto=false]{\dots} (7.8,5.1);
\path (10.4,9.3) -- node[auto=false]{\dots} (10.8,9.3);

\end{tikzpicture}
	\caption{We start off on the left side of the illustration with the two input lists $L_1, L_2$ containing the closest pair $(\vec x, \vec y)$. Going right, in each iteration of the algorithm, $N$ different $\vec{z}_i^{(j)}$ are randomly chosen and all of the list elements are tested if they fulfill the bucketing criterion. The crosshatched pattern indicates the parts where the bucket criterion is fulfilled, i.e. the list vectors differ from $\vec{z}_i^{(j)}$ in $\delta k$ positions.}
	\label{fig:tikz:alg:prep}
\end{figure}
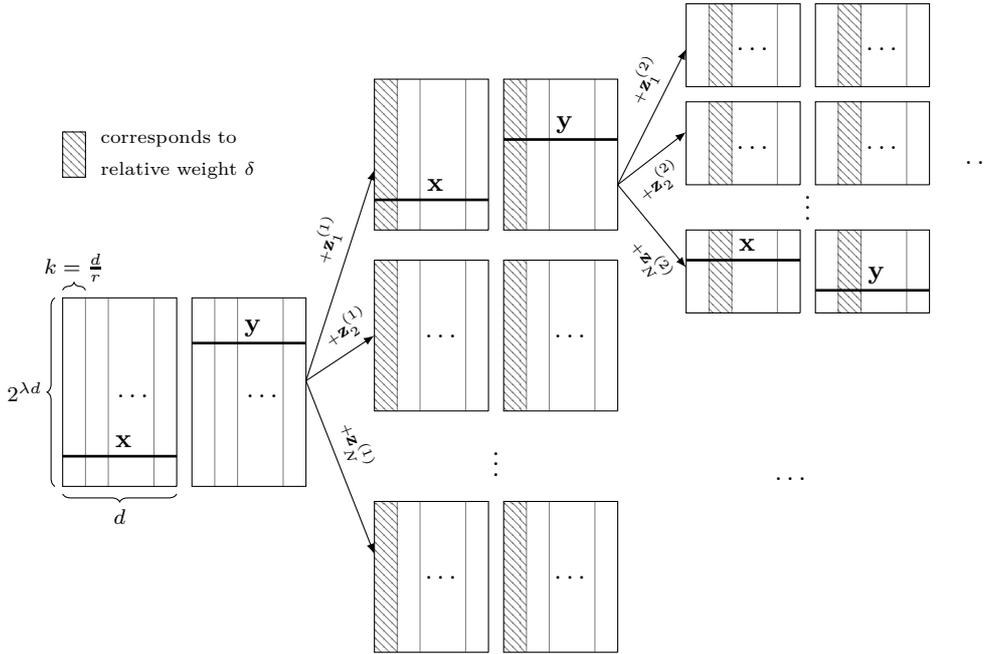

More precisely in each recursive iteration, our algorithm works only on equally large blocks of the input vectors and not on the full $d$ coordinates, i.e. the weight condition is only checked on the current block. This is a technical necessity to obtain independence of vectors in the same bucket on fresh blocks. Let us formally define the notion of blocks.

\newpage
\begin{definition}[Block]
	Let $d, r\in\N$ with $r\mid d$ and $i\in[r]$. Then we denote the $i$-th \emph{block} of $[d]$ as
	\[ B^d_{i, r} := \left[(i-1)\frac{d}{r}+1,\;i\frac{d}{r}\right]\enspace.\]
	Note that $[d] = \bigcup_{i\in [r]}B^d_{i, r}$ and $\left|B^d_{i, r}\right|=\frac{d}{r}$ for each $i\in[r]$. Furthermore, the blocks are disjoint. For a leaner notation and since the role of $d$ does not change in the course of this paper, we omit the index $d$ in the following, thus we write $B_{i,r}:=B^d_{i,r}$.
	\label{def:block}
\end{definition}

Note that May and Ozerov choose the weight of the vectors on random projections as a criterion. In comparison to our variant, their approach involves more parameters and requires extensive re-randomizations of the instance to achieve good success probabilities which together complicates analysis considerably. We cannot rule out the possibility that a different analysis of the May-Ozerov algorithm would allow for an application in the high density regime. However, the complexities of this hypothetical variant are unclear, while our version allows for easy analysis and yields provably optimal complexities in this regime.

In each iteration, we choose the number $N$ of buckets in such a way that with overwhelming probability the closest pair lands in at least one of the buckets. Hence, our algorithm creates a tree with branching factor $N$ with the distinguished pair being contained in one of the leaves. The deeper we get into the tree, the smaller and, hence, the easier the closest pair instances get. An algorithmic description of the whole procedure is given in pseudocode in \cref{alg:new-nn-it}. For convenience, a summary of all parameter choices made in \linesref{line:parameters} of the algorithm can be found in \cref{eqn:all-params} at the end of \cref{sec:new-alg}.

\begin{algorithm}
	\begin{algorithmic}[1]
		\Require{lists $L_1, L_2\in\left(\F_2^d\right)^{2^{\lambda d}}$, weight parameter $\gamma\in\left[0,\frac{1}{2}\right]$}
		\Ensure{list $L$ containing the solution $(\vec x,\vec y)\in L_1\times L_2 $ to the $\NNP$}
		\State Set $r,P,N\in\N,\;\delta\in\left[0, \frac{1}{2}\right]$ properly and define $k:=\frac{d}{r}$\label{line:parameters}
		\For{$P$ permutations $\pi$}\Comment{permutation on the bit positions}
		\State{Stack $S:=[(\pi(L_1), \pi(L_2), 0)]$}
		\State $L\leftarrow\emptyset$
		\While{$S$ is not empty}\label{line:main-loop}
		\State $(A, B, i)\leftarrow S.\textrm{pop}()$
		\If{$i<r$}
		\For{$N$ randomly chosen $\vec z\in\F_2^{k}$}
		\State $A'\leftarrow (\vec v~\in A\mid \wt\big(\block{\vec v~+\vec z}{i+1}{r}\big)=\delta k)$\label{line:bucket-assignment}
		\State $B'\leftarrow (\vec w\in B\mid \wt\big(\block{\vec w+\vec z}{i+1}{r}\big)=\delta k)$\label{line:bucket-assignment2}
		\State $S.\textrm{push}((A', B', i+1))$\label{line:push-new}
		\EndFor
		\Else
		\For{$\vec v \in A, \vec w \in B$} \Comment{Naive search}\label{line:naive-search}
		\If{$\wt(\vec v + \vec w)=\gamma d$}
		\State $L\leftarrow L\cup \{(\vec v,\vec w)\}$
		\EndIf
		\EndFor
		\EndIf
		\EndWhile
		\State\Return $L$
		\EndFor
	\end{algorithmic}
	
	\caption{\NNSolver($L_1, L_2,\gamma$)}
	\label{alg:new-nn-it}
\end{algorithm}

\newpage

The following theorem gives the time complexity of our algorithm to solve the $\NNP$.

\begin{restatable}{theorem}{mainthm}
	Let $\gamma\in\left[0,\frac{1}{2}\right]$ and $\lambda\in[0,1]$. Then \cref{alg:new-nn-it} solves the $\NNP$ problem with overwhelming success  probability in expected time ${2^{\vartheta d(1+o(1))}}$, where
	\[
	\vartheta=\begin{cases} (1-\gamma)\bigg(1-\h\bigg(\frac{\delta^\star-\frac{\gamma}{2}}{1-\gamma}\bigg)\bigg) &\textrm{for } \gamma \leq \gamma^\star \\
		2\lambda+H(\gamma)-1 &\textrm{for } \gamma>\gamma^\star\enspace,
	\end{cases}
	\]
	with  $\delta^\star:=\hi(1-\lambda)$ and $\gamma^\star:=2\delta^\star(1-\delta^\star)$.
	\label{thm:main}
\end{restatable}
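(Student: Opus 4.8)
The plan is to analyze \cref{alg:new-nn-it} via a standard recursion-tree argument, tracking three quantities along any root-to-leaf path: the expected list size, the probability that the planted pair $(\vec x, \vec y)$ survives the bucketing, and the expected total work. Fix the parameters $r$ (number of blocks/levels), $\delta$ (relative target weight per block), $N$ (branching factor), and $P$ (number of outer permutations), with $k = d/r$. Since on each fresh block $B_{i+1,r}$ the vector $\vec z$ is chosen uniformly and independently, for an arbitrary list element $\vec v$ we have $\p{\wt(\block{\vec v + \vec z}{i+1}{r}) = \delta k} = \binom{k}{\delta k} 2^{-k} = \tmt{2^{-(1-\h(\delta))k}}$. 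Hence each bucket shrinks by a factor $\approx 2^{-(1-\h(\delta))k}$ per level, so after $r$ levels the expected leaf-list size is $\tmt{2^{\lambda d - (1-\h(\delta))d}}$; we will want to choose $\delta$ so that the leaf lists are polynomial, i.e. $\lambda = 1 - \h(\delta)$, which is exactly $\delta^\star = \hi(1-\lambda)$. The first step is to make this precise, using a union bound / Markov argument over all $N^r$ leaves to argue that with overwhelming probability \emph{every} leaf list (or at least all but a negligible fraction of the probability mass) has size $\tmo{2^{(\lambda - (1-\h(\delta)))d}}$, possibly after inflating by a small polynomial slack.

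The second step is the survival probability of the planted pair. Condition on $\wt(\vec x + \vec y) = \gamma d$; after a uniformly random permutation $\pi$ of the coordinates, the $\gamma d$ error positions are spread across the $r$ blocks, and with overwhelming probability each block $B_{i,r}$ carries $(\gamma/r \pm o(1))k$ of them — this is where the outer permutation loop and the ``fresh block'' design are used, and it is the reason the analysis needs blocks rather than the full $d$ coordinates. Now fix a level with a block on which $\vec x$ and $\vec y$ differ in $\approx \gamma k$ positions. For a uniformly random $\vec z \in \F_2^k$, the event that \emph{both} $\wt(\vec x + \vec z) = \delta k$ and $\wt(\vec y + \vec z) = \delta k$ happens with probability $\tmt{2^{(1-\gamma)(\h(\frac{\delta - \gamma/2}{1-\gamma}) - 1)k}}$: indeed $\vec z$ must differ from $\vec x$ in $\delta k$ of the $(1-\gamma)k$ agreement positions in a balanced way and split the $\gamma k$ disagreement positions evenly, and a short binomial computation gives this exponent (this is the combinatorial heart of the criterion). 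So per level the pair survives in a \emph{fixed} bucket with this probability, and we choose $N$ just large enough — $N = \tmt{2^{(1-\gamma)(1 - \h(\frac{\delta-\gamma/2}{1-\gamma}))k} \cdot d}$ with a polynomial safety factor — so that among the $N$ buckets at least one catches the pair except with probability $2^{-\omega(\log d)}$; iterating over $r$ levels and then over the $P = \mathrm{poly}(d)$ permutations boosts the overall success probability to overwhelming.

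The third step assembles the running time. The tree has branching factor $N$ and depth $r$, so there are $N^r$ leaves; the dominant cost is the sum over all tree nodes of the list sizes (for bucketing, \linesref{line:bucket-assignment}--\ref{line:bucket-assignment2}) plus, at the leaves, the quadratic naive search (\linesref{line:naive-search}). Multiplying $N^r$ by the per-leaf work — either the leaf list size when it dominates, or its square when the leaf lists are not yet polynomial — and taking logarithms base $2$, the exponent becomes $r \log N + (\text{leaf exponent})$ all divided by $d$; substituting $k = d/r$, $N$'s exponent, and $\delta = \delta^\star$ gives, after the $r$-factors cancel, total exponent $(1-\gamma)(1 - \h(\frac{\delta^\star - \gamma/2}{1-\gamma}))$ — matching the first branch of $\vartheta$. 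One then checks that this choice is \emph{feasible} only while it dominates the trivial lower bound $2^{(2\lambda + \h(\gamma)-1)d}$ coming from the number $E$ of closest pairs that must be output; the crossover is exactly at $\gamma^\star = 2\delta^\star(1-\delta^\star)$ (the value of $\gamma$ for which $\delta^\star = \frac{\gamma}{2} + (1-\gamma)\cdot\frac{1}{2}\cdot 0$-type balancing makes the two expressions equal), and for $\gamma > \gamma^\star$ one instead sets $\delta$ (equivalently stops recursing / enlarges buckets) so that the algorithm simply pays $E = \tmt{2^{(2\lambda + \h(\gamma)-1)d}}$, giving the second branch. Finally the $P$ permutations and all poly$(d)$ slack factors are absorbed into the $2^{\vartheta d(1+o(1))}$ form.

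The main obstacle I expect is the second step: controlling the survival probability of the planted pair \emph{jointly} across all $r$ levels while the list elements in a surviving bucket are no longer independent of $\vec z$ — the paper's own remark that working on fresh blocks is "a technical necessity to obtain independence" signals that the delicate point is showing that, conditioned on $(\vec x, \vec y)$ having landed in a bucket at levels $1,\dots,i$, the coordinates in block $B_{i+1,r}$ are still "fresh" enough that the level-$(i+1)$ survival probability and the level-$(i+1)$ shrinkage estimate both hold with the claimed exponents. Making that conditioning rigorous (and simultaneously arguing the bucket-size concentration via a union bound that does not blow up the exponent) is the crux; the rest is binomial bookkeeping with $\h(\cdot)$ and its inverse.
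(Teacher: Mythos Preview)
Your outline matches the paper's strategy closely: compute the per-block shrinkage probability $\prb$, the per-block joint survival probability $\prw$, set $N\approx d/\prw$, and read off the exponent from $N^r$ times the leaf work. Two points deserve correction.

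First, the detour through ``with overwhelming probability every leaf list has size $\ldots$'' via a union bound over the $N^r$ leaves is both unnecessary and risky (there are exponentially many leaves). The paper never controls individual bucket sizes. It works entirely in expectation: the expected cost at level $i$ is $N^i\cdot\E[|A_{i-1}|]$ for the filtering step, and $N^r\cdot\E[|A_r|\cdot|B_r|]$ for the naive search, where a short calculation shows $\E[|A_r|\cdot|B_r|]=\calO(\calL_r^2)$ under the uniform model. A single Markov inequality at the very end converts the expected-time bound into a high-probability one. This also dissolves your ``main obstacle'': because the blocks $B_{i,r}$ are disjoint, the events at different levels involve independent coordinates of both the data and the fresh $\vec z$'s, so no conditioning issue arises---the block design makes the independence automatic, not delicate.

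Second, your treatment of the regime $\gamma>\gamma^\star$ is a genuine gap. ``Stop recursing / enlarge buckets so that the algorithm simply pays $E$'' is not an argument; one must exhibit a $\delta$ for which the running-time formula actually evaluates to $2\lambda+\h(\gamma)-1$. The paper does this by observing that for every $\delta\ge\delta^\star$ the dominant term is $(2^{\lambda d}\prb^r)^2/\prw^r$, whose log-exponent is
\[
\vartheta^\star(\delta)=2(\lambda+\h(\delta)-1)+(1-\gamma)\Big(1-\h\Big(\tfrac{\delta-\gamma/2}{1-\gamma}\Big)\Big);
\]
minimizing this over $\delta$ yields $\delta_{\min}=\tfrac12(1-\sqrt{1-2\gamma})$ with value $\vartheta^\star(\delta_{\min})=2\lambda+\h(\gamma)-1$. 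The threshold $\gamma^\star=2\delta^\star(1-\delta^\star)$ is precisely the condition $\delta_{\min}\ge\delta^\star$, so for $\gamma>\gamma^\star$ one sets $\delta=\delta_{\min}$ (buckets become \emph{less} selective, leaf lists are exponentially large, and the quadratic leaf cost matches $E$), while for $\gamma\le\gamma^\star$ the constraint binds and one takes $\delta=\delta^\star$. Your crossover heuristic pointed in the right direction but did not supply the minimizer or verify its value.
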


Note that the case distinction marks the transition to the high density regime. Precisely, the transition happens when the amount of closest pairs becomes larger than the running time in the first case. In this first case, where $\gamma\leq\gamma^\star$ our algorithm exactly matches the running time of the May-Ozerov algorithm, which itself is shown to match the lower bound for LSH based approaches whenever $\lambda$ approaches zero \cite{EC:MayOze15} (see also \cref{lem:lim_omega}). In the second case, where $\gamma>\gamma^\star$ the running time of our algorithm becomes linear in the number of closest pairs, hence it matches the lower bound from \cref{eqn:Topt-uniform}, while the running time of May-Ozerov stays as in the first case. Our algorithm hence optimally extends the May-Ozerov algorithm to the high density regime.

We establish the proof of \cref{thm:main} in a series of lemmata and theorems. Note that any bucketing algorithm heavily depends on two probabilities specific to the chosen bucketing criterion. First, the probability that any element falls into a bucket, which we call $\prb$ in the remainder of this work. This probability is mainly responsible for the lists' sizes throughout the algorithm. The second relevant probability, which we call $\prw$ describes the event of both, $\vec x$ and $\vec y$, falling into the same bucket, where $(\vec x,\vec y)$ is the solution to the $\NNP$ problem. This is the probability of $(\vec x, \vec y)$ \emph{surviving} one iteration meaning that $\prw$ determines the success probability of the algorithm. In summary, for our choice of bucketing criteria, we get
\begin{align}
	\begin{split}
		\prb &:=\underset{\vec z}{\textrm{Pr}} \left[\wt(\block{\vec v+\vec z}{i}{r})=\delta k\right]\textrm{ for any } \vec v\in\F_2^k \textrm{ and }\\
		\prw&:=\underset{\vec z}{\textrm{Pr}} \left[\wt(\block{\vec{x}+\vec{z}}{i}{r})=\wt(\block{\vec{y}+\vec{z}}{i}{r})=\delta k\right]\label{eqn:probabilities}\enspace,
	\end{split}
\end{align}
where $k=\frac{d}{r}$ is the block width. If we assume that the $\gamma d$ differing coordinates of $\vec x$ and $\vec y$ distribute evenly into the $r$ blocks, i.e. $\wt(\block{\vec{x}+\vec{y}}{i}{r})=\gamma k$ for each $i$, these probabilities are independent of $i$ for $\delta k $ fixed. This property is ensured for at least one of the $P$ permutations in \cref{alg:new-nn-it} with overwhelming probability, as we will see in the proof of \cref{thm:main-easy}.

We determine the exact form of $\prb$ and $\prw$ later. First, we are going to prove the following statement about the expected running time of \cref{alg:new-nn-it} in dependence on both probabilities.

\begin{theorem}\label{thm:main-easy}
	Let $\prb$ and $\prw$ be as defined in \cref{eqn:probabilities}, $\gamma\in\left[0,\frac{1}{2}\right] $, $\lambda\in[0, 1]$ and $r=\frac{\lambda d}{\log^2 d}$. Then \cref{alg:new-nn-it} solves the $\NNP$ problem in expected time
	\[	\max\bigg({\prw^{-r}} ,\frac{2^{\lambda d}\cdot \prb^{r-1}}{\prw^r},\frac{\left(2^{\lambda d}\cdot \prb^{r}\right)^2}{\prw^r}\bigg)^{1+o(1)}\
	\]
	with a success probability overwhelming in $d$. 
	\label{thm:main-qq2}
\end{theorem}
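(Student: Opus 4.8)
The plan is to analyze \cref{alg:new-nn-it} as a search tree of depth $r$ with branching factor $N$, tracking three things: the sizes of the lists at each level, the probability that the planted pair $(\vec x,\vec y)$ survives to a leaf, and the total work done. First I would fix a permutation $\pi$ that distributes the $\gamma d$ differing coordinates of $\vec x,\vec y$ evenly across the $r$ blocks, i.e.\ $\wt(\block{\vec x+\vec y}{i}{r})=\gamma k$ for all $i$; a union bound over the $\binom{d}{\gamma d}$ possible placements together with $P$ independent random permutations shows that one such ``good'' permutation occurs with probability overwhelming in $d$, provided $P$ is a sufficiently large polynomial in $d$ (this is where $r=\lambda d/\log^2 d$, making $k=\log^2 d$, keeps the relevant quantities benign). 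Conditioned on a good $\pi$, the probabilities $\prb$ and $\prw$ from \cref{eqn:probabilities} are the same at every level $i$.

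Next I would bound the three terms in the $\max$. \textbf{Success probability / tree depth term:} at each level the planted pair survives a fixed bucket with probability $\prw$, and with $N$ independent buckets its survival probability in that level is $1-(1-\prw)^N\ge 1-e^{-N\prw}$; choosing $N=\Theta(\prw^{-1}\cdot d)$ makes the per-level failure probability $\le e^{-d}$, so by a union bound over the $r$ levels the pair reaches a leaf with probability overwhelming in $d$. The number of leaves along the surviving path is $1$, but the cost of merely enumerating the tree is governed by $\prw^{-r}$, since $N^r=\Theta(\prw^{-r}d^r)=\prw^{-r(1+o(1))}$ (here $r=o(d/\log d)$ ensures $d^r=2^{o(d)}$). \textbf{List-size term:} the expected size of a list at level $i$ is $2^{\lambda d}\prb^{i}$ (each of the $2^{\lambda d}$ elements independently passes $i$ successive weight filters), and the number of nodes at level $i$ is $N^i=\prw^{-i(1+o(1))}$, so the expected total work spent on bucketing at level $i$ is $N^i\cdot 2^{\lambda d}\prb^{i-1}=2^{\lambda d}\prb^{i-1}\prw^{-i(1+o(1))}$, maximized at $i=r$, giving the second term. \textbf{Leaf term:} at a leaf (level $r$) the naive search over a pair of lists of expected size $2^{\lambda d}\prb^{r}$ costs the square of that, and multiplying by the $N^r=\prw^{-r(1+o(1))}$ leaves yields the third term. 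Taking the maximum of these three, each raised to the $(1+o(1))$, gives the claimed running time.

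The main obstacle, and the part requiring the most care, is making the ``expected'' list sizes behave like ``actual'' list sizes throughout the tree, and controlling the $o(1)$ uniformly. Two subtleties arise: (i) the list elements in a node at level $i$ are no longer independent of one another because they have all passed the same sequence of weight filters on blocks $1,\dots,i$ — this is exactly why the algorithm works block-wise, so that at level $i+1$ the filter acts on block $i+1$, a fresh set of coordinates on which the surviving vectors are still uniform and (pairwise) independent; I would make this precise by an induction on the level, arguing that conditioned on survival to level $i$, the restrictions of list vectors to blocks $i+1,\dots,r$ are uniform and pairwise independent. (ii) One must argue that the random variables concentrate well enough that the expectations dominate the running time up to $2^{o(d)}$; since we only claim a bound on the \emph{expected} running time, linearity of expectation suffices for the work terms, and a Markov/union-bound argument handles the success probability. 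I would also note that $\prb,\prw$ are both of the form $2^{-\Theta(k)}=2^{-\Theta(\log^2 d)}$, so quantities like $N=\Theta(d\,\prw^{-1})$, $d^r$, and the polynomial $P$ all get absorbed into the $(1+o(1))$ exponent; checking that this absorption is legitimate for \emph{every} term simultaneously is the bookkeeping heart of the proof. The explicit optimization of $\delta$ (and hence the closed form in \cref{thm:main}) is deferred: here we only prove the running time as a function of $\prb$ and $\prw$.
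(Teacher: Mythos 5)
Your plan follows essentially the same route as the paper's proof: view the algorithm as a depth-$r$ tree of branching factor $N$, amplify success via a good permutation and via $N\approx d/\prw$ buckets per level, and bound the work by the three quantities (tree traversal, bucketing cost per level, and leaf-level quadratic search). Two points deserve sharpening. First, the number of permutations $P$ cannot be polynomial in $d$: the probability of a good permutation is $\binom{k}{\gamma k}^r/\binom{d}{\gamma d}\ge(k+1)^{-r}$ with $k=\log^2 d$, so one needs $P\approx(d+1)^{r+1}=2^{\Theta(d/\log d)}$, which is superpolynomial but still $2^{o(d)}$ and therefore absorbed into the $(1+o(1))$ exponent; your parenthetical ``polynomial in $d$'' is off, though the absorption argument you cite is the right one. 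Second, your leaf term relies implicitly on $\E[|A_r|\cdot|B_r|]\approx\E[|A_r|]\cdot\E[|B_r|]$, which is not automatic since $|A_r|$ and $|B_r|$ are correlated through the shared bucket centers $\vec z$. You do flag pairwise independence in subtlety (i), and that is indeed the needed ingredient, but it must be applied to the \emph{joint} bucketing event: for $(\vec v,\vec w)\ne(\vec x,\vec y)$ the pair survives one level with probability exactly $\prb^2$ (the planted pair with probability $\prw$), yielding the recursion $\E[|A_i||B_i|]\le \prb^2\,\E[|A_{i-1}||B_{i-1}|]+1$ and hence $\E[|A_r||B_r|]=\mathcal{O}\bigl(2^{2\lambda d}\prb^{2r}\bigr)$. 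Stated that way, linearity of expectation over leaves finishes the work bound. Aside from these two spots, your outline matches the paper's argument.
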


\begin{proof}
	First, we are going to prove the statement about the time complexity. 
	
	The algorithm maintains a stack, containing list pairs together with an associated counter. In every iteration of the loop in \linesref{line:main-loop}, one element is removed from the stack and if the counter $i$ associated with this element is smaller than $r$, $N$ additional elements $(A',B',i+1)$ are pushed to the stack in \linesref{line:push-new}. Let us consider the elements on the stack as nodes in a tree of depth $r$, where all elements with associated counter $i$ are siblings on level $i$ of the tree. Also, depict the elements pushed to the stack in \linesref{line:push-new} as child nodes of the currently processed node $(A,B,i)$. Then the total number of elements with associated counter $i$ pushed to the stack is bounded by the number of nodes on level $i$ in a tree with branching factor $N$, which is $N^i$.
	
	Next, let us determine the lists' sizes on level $i$ of that tree. Therefore, let the expected size of lists on level $i$ be $\calL_i$. As these lists are constructed from the lists of the previous level by testing the weight condition in \linesref{line:bucket-assignment} and \ref{line:bucket-assignment2}, it holds that
	\[\calL_i=\calL_{i-1}\cdot\p{\wt(\block{\vec v + \vec z}{i}{r}))=\delta k}:=\calL_{i-1}\cdot \prb\enspace,
	\]
	where $i>0$ and by construction $\calL_0=|L_1|$. By substitution we get
	\[
	\calL_i=|L_1|\cdot \prb^{i}\enspace, \textrm{ for } i=0,\ldots,r.\] 
	
	Now, we are able to compute the time needed to create the nodes on level $i$ of the tree. Observe that for the creation of a level-$i$ node we need to linearly scan through the larger lists of a node on level $i-1$ to check the weight conditions. Thus, to construct all $N^{i}$ nodes of level $i$ we need a total time of
	\[
		T_{i}=\tmo{\calL_{i-1} \cdot N^{i}}=\tmo{|L_1|\cdot \prb^{i-1}\cdot N^{i}}\enspace,
	\]
	for each $0<i\leq r$. Eventually, the list pairs on level $r$ are matched by a naive search with quadratic runtime resulting in
	\[
		T_{r+1}=\tmo{N^r\cdot \E[|A_r|\cdot |B_r|]}\enspace,
	\]
	where $A_r,B_r$ describe the lists of a level-$r$ node.
	
	The expected value of the product, now, depends on the chosen input distribution. We next argue that for the given input distribution we have 
		\[
	\E[|A_r|\cdot |B_r|]=\calO\big( \E[|A_r|]\cdot\E[|B_r|]\big)=\calO(\calL_r^2)\enspace .
	\]
	
	To see this, first note that for $\vec v, \vec w, \vec z$ independent and uniform, $\vec v + \vec z$ and $\vec w + \vec z$ are also independent and uniform. This in turn implies
	\begin{align*}
		& \p{\wt(\block{\vec v + \vec z}{i}{r})) = \delta k, \wt(\block{\vec w + \vec z}{i}{r}))=\delta k} \\
		=& \p{\wt(\block{\vec v + \vec z}{i}{r}))=\delta k}\cdot \p{\wt(\block{\vec w + \vec z}{i}{r}))=\delta k} \\
		=& \prb^2
	\end{align*}
		since deterministic functions of independent random variables are still independent. This also works for either $\vec v = \vec x$ or $\vec w = \vec y$, but not for $(\vec v, \vec w)=(\vec x, \vec y)$.  In this case, however, we have $\p{\wt(\block{\vec x + \vec z}{i}{r})) = \delta k, \wt(\block{\vec y + \vec z}{i}{r}))=\delta k}=\prw$ by definition. With this insight, we can express $\E[|A_{i}|\cdot |B_{i}|]$ in terms of $\E[|A_{i-1}|\cdot |B_{i-1}|]$ for each $i$ via
	\begin{align*}
		\E[|A_{i}|\cdot|B_{i}|\mid A_{i-1}, B_{i-1}]&=\sum_{\substack{\vec v \in A_{i-1},~\vec w \in B_{i-1}\\ (\vec v, \vec w)\neq(\vec x, \vec y)}}\p{\wt(\block{\vec v + \vec z}{i}{r}) = \delta k, \wt(\block{\vec w + \vec z}{i}{r})=\delta k} \\
		&+\p{\wt(\block{\vec x + \vec z}{i}{r})) = \delta k, \wt(\block{\vec y + \vec z}{i}{r}))=\delta k}\\
		&=(|A_{i-1}|\cdot|B_{i-1}|-1)\prb^2+\prw\\
		&\le |A_{i-1}|\cdot|B_{i-1}|\cdot \prb^2 +1\enspace,
	\end{align*}
Applying the Law of total Expectation we obtain
\begin{align}\E[|A_{i}|\cdot|B_{i}|]=\E[\E[|A_{i}|\cdot|B_{i}|\mid A_{i-1}, B_{i-1}]]\le \E[|A_{i-1}|\cdot|B_{i-1}|]\cdot \prb^2 + 1
\label{eqn:sizes}
\end{align}
Successive application of \cref{eqn:sizes} yields
\begin{align}
\E[|A_{r}|\cdot|B_{r}|] \le \E[|L_1|\cdot|L_2|]\cdot \prb^{2r} + r = 2^{2\lambda d} \prb^{2r}+r=\calO(\calL_r^2)
\label{eqn:expectation-random}
\end{align}

	Finally, the algorithm is repeated for $P$ different permutations on the bit positions of elements in $L_1,L_2$.
	In summary, the expected time complexity to build all list becomes the sum of the $T_i$ multiplied by $P$, thus, by choosing $N:=\frac{d}{\prw}$ and $P=(d+1)^{r+1}$ we get
	
	\begin{align*}
		T'&=P\cdot \sum_{i=1}^{r+1} T_i\leq (d+1)^{r+1}\cdot \left( \sum_{i=1}^{r} N^{i}\cdot |L_1|\cdot \prb^{i-1} + (|L_1|\cdot \prb^r)^2\cdot N^r \right)\\
		&=(d+1)^{r+1}\cdot \left( \sum_{i=1}^{r}\frac{|L_1|\cdot d^{i}}{\prw}\cdot\left(\frac{\prb}{\prw}\right)^{i-1} + \frac{\left(|L_1|\cdot \prb^{r}\right)^2\cdot d^{r}}{\prw^r} \right)\\
		&\leq (d+1)^{2r+1}\cdot \bigg( \frac{r\cdot |L_1|\cdot \prb^{r-1}}{\prw^{r}} + \frac{\left(|L_1|\cdot \prb^{r}\right)^2}{\prw^r}\bigg)\\
		&=\max\bigg(\frac{ 2^{\lambda d}\cdot \prb^{r-1}}{\prw^{r}} ,\frac{\left(2^{\lambda d}\cdot \prb^{r}\right)^2}{\prw^r}\bigg)^{1+o(1)}\enspace ,
	\end{align*}
	where the inequality follows from the fact that $\frac{\prb}{\prw}\geq1$ since
	\begin{align*}
		\prw&=\p{\wt(\block{\vec{x}+\vec{z}}{i}{r})=\wt(\block{\vec{y}+\vec{z}}{i}{r})=\delta k}\\
		&\le\p{\wt(\block{\vec{x}+\vec{z}}{i}{r})=\delta k}\\
		&=\prb\enspace,
	\end{align*}
	and the final equality stems from the fact that $|L_1|=2^{\lambda d}$ and $r=o(\frac{\lambda d}{\log d})$ as given in the theorem.
	
	Note that $T'$ disregards the fact that no matter how small the lists in the tree become, the algorithm needs to traverse all 
	\[
	T''= \tmo{N^r}=\tmo{\left(\frac{d}{\prw}\right)^r}
	\]
	 nodes of the tree. Hence, the expected time complexity of the whole algorithm is
	 \[
	 T=\max(T',T'')\enspace,
	 \]
	 which proves the claim.
	\medskip
	
	Let us now consider the success probability of the algorithm. Therefore, we assume that the chosen permutation distributes the weight on $\vec x+\vec y$ such that in every block of length $r$ the weight is equal to $\frac{\gamma d}{r}$, which we describe as a \emph{good} permutation. The probability of a random permutation $\pi$ distributing the weight in such a way is
	\[\p{\textrm{good }\pi} = \p{\wt\left(\pi\block{\vec x + \vec y}{i}{r}\right)=\frac{\gamma d}{r}, \textrm{ for }i=1,\ldots,r} = \frac{\binom{\frac{d}{r}}{\frac{\gamma d}{r}}^r}{\binom{d}{\gamma}}\geq \left(\frac{d}{r}+1\right)^{-r}\enspace.\]

	Thus, the probability of at least one out of $(d+1)^{r+1}$ chosen permutations being good is
	\begin{align*}
	p_1&:=\p{\textrm{at least one good }\pi}\\
	&=1-(1-\p{\textrm{good }\pi})^{(d+1)^{r+1}}
	=1-\left(1-\left(\frac{d}{r}+1\right)^{-r}\right)^{(d+1)^{r+1}}\geq1-e^{-d}\enspace.
	\end{align*}
	 
	 The algorithm succeeds, whenever there exists a leaf node in the tree, containing the distinguished pair $(\vec x, \vec y)$. As every node in the tree is constructed based on its parent, it follows that all nodes on the path from the root to that leaf need to contain $(\vec x, \vec y)$. By definition the probability of $\vec x$ and $\vec y$ satisfying the bucket criterion at the same time (thus for the same $\vec z$) is $\prw$ and since we condition on a good permutation, $\prw$ is equal for every considered block. Let us define indicator variables $X_{j}$ for the first level, where $X_{j}=1$ iff the $j$-th node contains $(\vec x,\vec y)$. Observe that the $X_{j}$ for independent choices of $\vec z$ are independent. Thus, clearly the number of trials until $(\vec x,\vec y)$ is contained in any node on level one is distributed geometrically with parameter $\prw$. Hence, the probability of the solution being contained in at least one node on the first level is
	\begin{align*}
	p_2&:=\p{\exists (A,B,1)\in S : (\vec x,\vec y)\in A\times B }\\
	&=1-(1-\prw)^N=1-(1-\prw)^{\frac{d}{\prw}}\ge 1-e^{-d}\enspace.
	\end{align*} 
	Now, imagine the pair being contained in some level-$i$ node. Considering that node, we have with the same probability $p_2$ again that at least one child contains the solution, and the same argument holds until we reach the leaves. Also, by the independent choices of $\vec z$ the events remain independent which implies that the probability of $(\vec x, \vec y)$ being contained in a level-$r$ list is $p_2^r$. In summary, the success probability is
	\begin{align*}
	\p{\textrm{success}}=p_1\cdot p_2^r&\geq (1-e^{-d})^{r+1} \geq1-\frac{r+1}{e^d}\geq 1-\frac{d}{e^d} \enspace.
	\end{align*}
\end{proof}

The proof of \cref{thm:main-qq2} already shows, how different distributions may affect the complexity of the algorithm by changing the expected value $\E[|A_r|\cdot|B_r|]$. This influence on the algorithms complexity by different input distributions is further investigated in \cref{sec:distributions}.

In the next two lemmata, we will determine the exact forms of $\prb$ and $\prw$ to conduct the run time analysis. 
\begin{lemma}\label{lem:xinT}
	Let $k\in\mathbb{N}$, $\delta\in\left[0, 1\right]$. If $\vec{x}\in\F_2^k$ and $\vec{z}\sim\mathcal{U}(\F_2^k)$ then \[\underset{\vec z}{\mathrm{Pr}}\left[\wt(\vec{x}+ \vec{z})=\delta k\right]=\binom{k}{\delta k}\left(\frac{1}{2}\right)^k.\]
\end{lemma}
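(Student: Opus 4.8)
The plan is to compute the probability directly by counting. Fix an arbitrary $\vec{x} \in \F_2^k$ and let $\vec{z} \sim \mathcal{U}(\F_2^k)$. The key observation is that the map $\vec{z} \mapsto \vec{x} + \vec{z}$ is a bijection on $\F_2^k$, so if $\vec{z}$ is uniform then $\vec{x} + \vec{z}$ is also uniform on $\F_2^k$. Hence the event $\wt(\vec{x} + \vec{z}) = \delta k$ has the same probability as the event $\wt(\vec{u}) = \delta k$ for $\vec{u} \sim \mathcal{U}(\F_2^k)$, which is entirely independent of $\vec{x}$.

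It then remains to count: the number of vectors in $\F_2^k$ of Hamming weight exactly $\delta k$ is $\binom{k}{\delta k}$ (this requires $\delta k \in \{0, 1, \ldots, k\}$, i.e. $\delta k \in \N$, which is implicitly assumed throughout since $\delta k$ denotes an integer weight; for other values of $\delta$ the probability is simply $0$ and the binomial coefficient is conventionally $0$ as well). Since each of the $2^k$ vectors in $\F_2^k$ is equally likely under the uniform distribution,
\[
\underset{\vec z}{\mathrm{Pr}}\left[\wt(\vec{x}+ \vec{z})=\delta k\right] = \underset{\vec u}{\mathrm{Pr}}\left[\wt(\vec{u})=\delta k\right] = \frac{\binom{k}{\delta k}}{2^k} = \binom{k}{\delta k}\left(\frac{1}{2}\right)^k,
\]
which is the claimed formula.

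There is no real obstacle here — the statement is essentially a restatement of the fact that XORing a fixed vector with a uniform random vector yields a uniform random vector, combined with the elementary count of weight-$w$ binary strings. The only point worth stating carefully is the bijectivity of translation in $\F_2^k$ (equivalently, that $\vec{z} = \vec{x} + \vec{u}$ recovers $\vec{z}$ uniquely from $\vec{u}$), which is what decouples the probability from $\vec{x}$; this is exactly the independence-type property that will be reused when analysing $\prb$ and $\prw$ in the subsequent lemma.
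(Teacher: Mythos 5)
Your proof is correct and follows essentially the same counting argument as the paper: the paper directly counts the $\binom{k}{\delta k}$ vectors $\vec z$ that differ from $\vec x$ in exactly $\delta k$ positions and divides by $2^k$, while you phrase the same count via the bijection $\vec z \mapsto \vec x + \vec z$ reducing to counting weight-$\delta k$ vectors. The two phrasings are equivalent, and your extra remark about integrality of $\delta k$ is a harmless clarification the paper leaves implicit.
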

\begin{proof}
	Since $\vec{z}\sim\mathcal{U}(\F_2^k)$, the probability is \[\frac{\left|\{\vec{z}\in\F_2^k\mid\wt(\vec{x}+ \vec{z})=\delta k\}\right|}{\left|\F_2^k\right|}.\]
	To compute the numerator, note that $\wt(\vec{x}+ \vec{z})=\delta k$ means that $\vec{x}$ and $\vec{z}$ differ in $\delta k$ out of $k$ coordinates, for which there are $\binom{k}{\delta k}$ possibilities. Using $\left|\F_2^k\right|=2^k$, the lemma follows.
\end{proof}


Before we continue, let us make a small definition.

\begin{definition}
	Let $k\in\mathbb{N}$ and $\vec{x}, \vec{y}\in\F_2^k$. Then we define $D(\vec{x}, \vec{y})\subseteq [k]$ to be the set of coordinates where $\vec{x}$ and $\vec{y}$ differ, i.e.
	\[D(\vec{x}, \vec{y}):=\{i\in[k]\mid \vec{x}_i\not=\vec{y}_i\}.\]
	Furthermore, let $S(\vec{x}, \vec{y}):=[k]\setminus D(\vec{x}, \vec{y})$ be the set of coordinates where they are the same.
\end{definition}

Now we derive the exact form of the probability $\prw$ of a pair with difference $\gamma k$ falling into the same bucket.
\begin{lemma}\label{lem:xyinT}
	Let $k\in\mathbb{N}$, $\delta\in\left[0, 1\right]$. If $\vec{x}, \vec{y}\in\F_2^k$ with $\wt(\vec{x}+ \vec{y})=\gamma k$ and $\vec{z}\sim\mathcal{U}(\F_2^k)$. Then \[\underset{\vec z}{\mathrm{Pr}}\left[\wt(\vec{x}+ \vec{z})= \wt(\vec{y}+ \vec{z})=\delta k\right]=\binom{\gamma k}{\frac{1}{2}\gamma k}\binom{(1-\gamma) k}{\left(\delta - \frac{\gamma}{2}\right)k}\left(\frac{1}{2}\right)^k.\]
\end{lemma}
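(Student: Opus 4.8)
The plan is to compute the probability by partitioning the coordinate set $[k]$ according to the set $D := D(\vec x, \vec y)$ of size $\gamma k$ where $\vec x$ and $\vec y$ disagree, and its complement $S := S(\vec x, \vec y)$ of size $(1-\gamma)k$ where they agree. On $S$ we have $\vec x_i = \vec y_i$, so for any coordinate $i\in S$ the contribution $(\vec x + \vec z)_i$ equals $(\vec y + \vec z)_i$; on $D$ we have $\vec x_i \ne \vec y_i$, so $(\vec x+\vec z)_i$ and $(\vec y+\vec z)_i$ are complementary bits, i.e. exactly one of the two is $1$ regardless of $\vec z_i$. This observation is the whole crux: it decouples the event into an event on $D$ and an event on $S$ that are independent because $\vec z$ is uniform and the blocks of coordinates are disjoint.

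First I would set $a := \wt\big((\vec x + \vec z)_D\big)$ (the number of ones $\vec x + \vec z$ has inside $D$). Since on $D$ the vectors $\vec x + \vec z$ and $\vec y + \vec z$ are bitwise complements, $\wt\big((\vec y+\vec z)_D\big) = \gamma k - a$. Next set $b := \wt\big((\vec x+\vec z)_S\big) = \wt\big((\vec y+\vec z)_S\big)$. Then $\wt(\vec x + \vec z) = a + b$ and $\wt(\vec y + \vec z) = (\gamma k - a) + b$. Requiring both to equal $\delta k$ forces $a + b = \delta k$ and $\gamma k - a + b = \delta k$; subtracting gives $2a = \gamma k$, so $a = \tfrac{\gamma}{2}k$, and then $b = \delta k - \tfrac{\gamma}{2}k = (\delta - \tfrac\gamma2)k$. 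So the event $\{\wt(\vec x + \vec z) = \wt(\vec y + \vec z) = \delta k\}$ is precisely the event $\{\wt((\vec x + \vec z)_D) = \tfrac\gamma2 k\} \cap \{\wt((\vec x+\vec z)_S) = (\delta - \tfrac\gamma2)k\}$.

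Then I would invoke independence: $\vec z_D$ and $\vec z_S$ are independent uniform vectors on $\F_2^{\gamma k}$ and $\F_2^{(1-\gamma)k}$ respectively (because $D$ and $S$ are disjoint), and each of the two weight conditions depends only on one of them. Applying \cref{lem:xinT} on the block $D$ with vector $\vec x_D$ and parameter $\tfrac\gamma2 k$ out of $\gamma k$ coordinates gives probability $\binom{\gamma k}{\frac12 \gamma k}(\tfrac12)^{\gamma k}$, and applying it on the block $S$ with vector $\vec x_S$ and parameter $(\delta - \tfrac\gamma2)k$ out of $(1-\gamma)k$ coordinates gives $\binom{(1-\gamma)k}{(\delta - \gamma/2)k}(\tfrac12)^{(1-\gamma)k}$. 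Multiplying these and noting $(\tfrac12)^{\gamma k}(\tfrac12)^{(1-\gamma)k} = (\tfrac12)^k$ yields the claimed formula.

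I do not expect a serious obstacle here; the only thing to be careful about is the bookkeeping that shows the two conditions $\wt(\vec x + \vec z) = \delta k$ and $\wt(\vec y + \vec z) = \delta k$ are jointly equivalent to the pair of decoupled conditions on $D$ and $S$ — in particular that they are consistent only when $\tfrac\gamma2 k$ is an integer (implicitly assumed, as $\gamma k$ is an even integer in the relevant setting) and that $\delta - \tfrac\gamma2 \in [0, 1-\gamma]$ for the binomial coefficient to be nonzero, which matches the regime $\tfrac\gamma2 \le \delta \le 1 - \tfrac\gamma2$ the algorithm operates in. One should also note the edge behavior when $\gamma = 0$ (then $D = \emptyset$, the first factor is $1$, and the formula collapses to \cref{lem:xinT} as it should). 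With these remarks the proof is a short computation.
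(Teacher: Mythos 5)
Your proof is correct and follows essentially the same route as the paper's: both decompose $[k]$ into the agreement set $S(\vec x,\vec y)$ and the disagreement set $D(\vec x,\vec y)$, observe that the two weight constraints force a unique split of $\vec z$'s disagreements between the two blocks (namely $\tfrac{\gamma}{2}k$ on $D$ and $(\delta-\tfrac{\gamma}{2})k$ on $S$), and count or multiply accordingly. The only cosmetic difference is that the paper counts $|A|$ directly and divides by $2^k$ at the end, whereas you invoke \cref{lem:xinT} on each block and multiply using the independence of $\vec z_D$ and $\vec z_S$; these are the same calculation.
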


\begin{proof}
	Let \[A:=\{\vec z\in\F_2^k\mid\wt(\vec{x}+ \vec{z})=\wt(\vec{y}+ \vec{z})=\delta k\}.\] In analogy to Lemma \ref{lem:xinT}, the probability we search for is $\frac{\left|A\right|}{\left|\F_2^k\right|}=\left|A\right|\cdot\left(\frac{1}{2}\right)^k.$
	
	In the following, let $\gamma_{\vec x}:=\wt(\vec x + \vec z)$ and analogously $\gamma_{\vec y}:=\wt(\vec y + \vec z)$. Now observe that every coordinate $z_i$ of $\vec z$ with $i\in S(\vec x,\vec y)$, so belonging to the set of equal coordinates between $\vec x$ and $\vec y$, either contributes to both $\gamma_{\vec x}$ \emph{and} $\gamma_{\vec y}$ with one or does not affect either one of them. Let us define the amount of the $z_i$'s with $i\in S(\vec x,\vec y)$ that contribute to the weight as $a:=|S(\vec x,\vec y)\cap D(\vec x,\vec z)|$.
	
	Now consider the $z_i$'s  with $i\in D(\vec x,\vec y)$. Clearly, any such $z_i$ contributes \emph{either} to $\gamma_{\vec x}$ \emph{or} to $\gamma_{\vec y}$. Thus, let us define the number of those $z_i$ with $i\in D(\vec x,\vec y)$ that contribute to $\gamma_{\vec x}$ as $b_\vec{x}:=|D(\vec x,\vec y)\cap D(\vec x, \vec z)|$ and analogously those which contribute to $\gamma_{\vec y}$ as $b_\vec{y}:=|D(\vec x,\vec y)\cap D(\vec y, \vec z)|$. Obviously we have 
	\begin{align}
		b_{\vec x} + b_{\vec{y}}=|D(\vec x, \vec y)|=\gamma k
	\end{align}
	On the other hand we are only interested in those $\vec z$ for which $\gamma_{\vec x}=\gamma_{\vec y}=\delta k$, which yields the two equations
	\begin{align}
		\gamma_{\vec x}&=a+b_{\vec x} = \delta k \\
		\gamma_{\vec y}&=a+b_{\vec y} = \delta k
	\end{align}
	All three equations together yield the unique solution
	\[
	b_{\vec{x}}=b_{\vec{y}}=\frac{\gamma k}{2} \textrm{ and } a=\left(\delta-\frac{\gamma}{2}\right)k\enspace.
	\]
	This shows the following: If $\vec{z}\in A$, it is necessary that $\vec{z}$ differs from $\vec{x}$ (analogously $\vec y$) in exactly
	\begin{enumerate}
		\item[--] $\frac{\gamma}{2}k$ out of $\gamma k$ coordinates of $D(\vec{x}, \vec{y})$ and
		\item[--] $\left(\delta-\frac{\gamma}{2}\right)k$ out of $(1-\gamma) k$ coordinates of $S(\vec{x}, \vec{y})$.
	\end{enumerate}
	
	Thus, because we can freely combine both conditions, in total there are \[\left|A\right|=\binom{\gamma k}{\frac{\gamma}{2}k}\binom{(1-\gamma) k}{\left(\delta-\frac{\gamma}{2}\right)k}\]
	different values for $\vec{z}$, finishing the proof.
\end{proof}

Now we are ready to prove \cref{thm:main} about the time complexity of \cref{alg:new-nn-it} for solving the $\NNP$ problem. For convenience, we restate the theorem here.

\mainthm*
\begin{proof}
First let us give the exact form of $\log\prb$ and $\log\prw$ using Stirling's formula to approximate the binomial coefficients in \cref{lem:xinT} and \ref{lem:xyinT}. By setting the block width $k=\frac{d}{r}$ we get 
	\begin{align*}
		\log \prw~ &= (1-\gamma) \bigg(H\Big(\frac{\delta-\frac{\gamma}{2}}{1-\gamma}\Big)-1\bigg)\frac{d}{r} \big(1+o(1)\big), \quad
		\log \prb = \big(H(\delta)-1\big)\frac{d}{r}\big(1+o(1)\big)\enspace.
	\end{align*}
	
Now, let us reconsider the running time given in \cref{thm:main-qq2} as 
		\[	T=\max\bigg(\underbrace{{\frac{1}{\prw^r}}}_{(a)},\underbrace{\frac{2^{\lambda d}\cdot \prb^{r-1}}{\prw^r}}_{(b)},\underbrace{\frac{\left(2^{\lambda d}\cdot \prb^{r}\right)^2}{\prw^r}}_{(c)}\bigg)^{1+o(1)}\enspace,
	\]
	where $r=\frac{\lambda d}{\log^2 d}$.
	
	We now show that the running time for all values of $\delta\geq \delta^\star :=\hi(1-\lambda)$ is solely dominated by $(c)$. Observe that we have $(c)\geq(b)$, whenever
	\begin{alignat*}{2}
		&& 2^{\lambda d}\cdot \prb^{2r} &\geq\prb^{r-1}\\
		\Leftrightarrow~&&H(\delta)&\geq 1-\frac{\lambda r}{r+1}\\
		\Leftrightarrow~ &&\delta&\geq \hi\Big(1-\frac{\lambda}{1+\frac{1}{r}}\Big) \rightarrow \hi(1-\lambda) =\delta^\star\enspace, 
		\intertext{	since $\frac{1}{r}=o(1)$. Also we have $(c)\geq(a)$ for the same choice of delta, as }
			&& 2^{2\lambda d}\cdot \prb^{2r} &\geq1\\
			\Leftrightarrow~ &&\delta&\geq \hi(1-\lambda) =\delta^\star\enspace.
		\end{alignat*}
	Thus, for all choices of $\delta\geq\delta^\star$ the running time is $(T_{\delta})^{(1+o(1))}$ with
	\[
	 \vartheta^\star(\delta):=\frac{ \log T_{\delta} }{d}= 2(\lambda+H(\delta)-1)+(1-\gamma)\bigg(1-\h\Big(\frac{\delta-\frac{\gamma}{2}}{1-\gamma}\Big)\bigg)\enspace .
	\]
	
	Now, minimizing $\vartheta^\star$ yields a global minimum at $\delta_\textrm{min}=\frac{1}{2}(1-\sqrt{1-2\gamma})$ attaining a value of 
	\[
		\vartheta^\star(\delta_\textrm{min})=2\lambda+H(\gamma)-1\enspace.
	\]
	
	As we are restricted to values for $\delta$ which are larger than $\delta^\star$ solving $\delta_\textrm{min}\geq\delta^\star$ for $\gamma$ yields
	\begin{alignat*}{2}
	&&  \delta_\textrm{min}&\geq\delta^\star\\
	\Leftrightarrow~&&\gamma&\geq 2\delta^\star(1-\delta^\star)=\gamma^\star\enspace.
\end{alignat*}

This proves the claim of the theorem whenever $\gamma>\gamma^\star$. For all other values of $\gamma$ we simply choose $\delta=\delta^\star$, which yields
\[
   \vartheta = \vartheta^\star(\delta^\star)=(1-\gamma) \bigg(1-\h\Big(\frac{\delta^\star-\frac{\gamma}{2}}{1-\gamma}\Big)\bigg) \textrm{ for } \gamma\leq\gamma^\star
\]
as claimed.

Now to boost the expected running time $2^{\vartheta d (1+o(1))}$ of the algorithm to actually being obtained with overwhelming probability we use a standard Markov argument. Let $X$ denote the random variable describing the running time of the algorithm. Then the probability that the algorithm needs more time than $2^{\sqrt{d}}E[X]$ to finish is
\[
	\p{X\geq  2^{\sqrt{d}}\cdot E[X]}\leq\frac{E[X]}{2^{\sqrt{d}}\cdot E[X]}=2^{-\sqrt{d}}\enspace,
\]
or equivalently the algorithm finishes in less time than $2^{\sqrt{d}}E[X]=2^{\vartheta d (1+o(1))}$ with overwhelming probability. Also, a standard application of the union bound yields that the intersection of the algorithm finishing within the claimed time and the algorithm having success in finding the solution is still overwhelming.
\end{proof}

The theorem shows that whenever $\gamma >\gamma^*$ our algorithm obtains the optimal time complexity for uniformly random lists as given in \cref{eqn:Topt-uniform}. Additionally, our algorithm reaches the time lower bound for locality-sensitive hashing based algorithms for all values of $\gamma$, whenever the input list sizes are subexponential in the dimension $d$, which is shown in the following lemma.

\begin{lemma}\label{lem:lim_omega}
Let $\gamma\in\left[0,\frac{1}{2}\right] $, and $\vartheta$ as defined in \cref{thm:main}. Then we have 
	\[
		\lim\limits_{\lambda\rightarrow0} \frac{\vartheta}{\lambda} = \frac{1 }{1-\gamma}\enspace.
	\]
\end{lemma}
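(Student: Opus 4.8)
The plan is to substitute the explicit formula for $\vartheta$ from \cref{thm:main} into the quotient $\vartheta/\lambda$ and perform a second-order expansion of the binary entropy function around $\tfrac12$.

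\textbf{Which branch of the formula applies.} Since $\hi$ is continuous with $\hi(1)=\tfrac12$, we have $\delta^\star=\hi(1-\lambda)\to\tfrac12$ as $\lambda\to0$, and hence $\gamma^\star=2\delta^\star(1-\delta^\star)\to\tfrac12$. Thus for any fixed $\gamma\in[0,\tfrac12)$ there is $\lambda_0>0$ with $\gamma\le\gamma^\star$ for all $\lambda\le\lambda_0$, so on that range $\vartheta=(1-\gamma)\bigl(1-\h(t)\bigr)$ with $t:=\frac{\delta^\star-\gamma/2}{1-\gamma}$. The boundary value $\gamma=\tfrac12$ is immediate: since $\gamma^\star<\tfrac12$ whenever $\lambda>0$, the second branch applies and $\vartheta=2\lambda+\h(\tfrac12)-1=2\lambda$, giving $\vartheta/\lambda=2=\tfrac{1}{1-1/2}$.

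\textbf{The expansion.} Fix $\gamma\in[0,\tfrac12)$ and write $\delta^\star=\tfrac12-\varepsilon$ with $\varepsilon=\varepsilon(\lambda)\downarrow0$ as $\lambda\downarrow0$. By definition $\lambda=1-\h(\delta^\star)=1-\h(\tfrac12-\varepsilon)$, and a one-line computation gives $t=\tfrac12-\tfrac{\varepsilon}{1-\gamma}$. Using $\h'(\tfrac12)=0$ and $\h''(\tfrac12)=-\tfrac{4}{\ln 2}$, Taylor's theorem yields
\[
1-\h\!\Bigl(\tfrac12-x\Bigr)=\frac{2}{\ln 2}\,x^{2}+O(x^{3})\qquad(x\to0).
\]
Substituting $x=\varepsilon$ and $x=\tfrac{\varepsilon}{1-\gamma}$ gives $\lambda=\tfrac{2}{\ln 2}\varepsilon^{2}(1+O(\varepsilon))$ and $1-\h(t)=\tfrac{2}{\ln 2}\tfrac{\varepsilon^{2}}{(1-\gamma)^{2}}(1+O(\varepsilon))$, hence
\[
\frac{\vartheta}{\lambda}=(1-\gamma)\cdot\frac{1-\h(t)}{1-\h(\delta^\star)}=(1-\gamma)\cdot\frac{1}{(1-\gamma)^{2}}\,(1+O(\varepsilon))=\frac{1}{1-\gamma}\,(1+O(\varepsilon)).
\]
Letting $\lambda\to0$, whence $\varepsilon\to0$, proves the claim. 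Equivalently, one can note $\frac{1-\h(1/2-ax)}{1-\h(1/2-x)}\to a^{2}$ as $x\to0$ by two applications of L'Hôpital's rule, with $a=\tfrac{1}{1-\gamma}$, $x=\varepsilon$, avoiding the explicit Taylor expansion.

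\textbf{Main obstacle.} There is no genuine difficulty here; the only points requiring care are the bookkeeping around the case distinction at $\gamma=\gamma^\star$ (verifying that for fixed $\gamma<\tfrac12$ the first branch of the $\vartheta$-formula eventually applies as $\lambda\to0$, and handling $\gamma=\tfrac12$ separately) and computing the constant $\h''(\tfrac12)=-4/\ln 2$ correctly; everything else is a routine limit evaluation.
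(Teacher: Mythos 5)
Your proof is correct. The overall structure matches the paper's: both first argue that for $\lambda\to0$ one has $\delta^\star\to\tfrac12$ and hence $\gamma^\star\to\tfrac12$, so the first branch of the $\vartheta$-formula eventually applies, and both then evaluate the resulting $0/0$ limit. Where you differ is in the second step: the paper simply cites May--Ozerov (their Corollary~1), who establish the limit by two applications of L'H\^opital's rule, whereas you carry out the computation yourself via the second-order Taylor expansion of $\h$ around $\tfrac12$ (using $\h'(\tfrac12)=0$, $\h''(\tfrac12)=-4/\ln 2$) after the clean substitution $\delta^\star=\tfrac12-\varepsilon$ and the observation $t=\tfrac12-\varepsilon/(1-\gamma)$; you also note the L'H\^opital route as an alternative. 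Your version buys self-containedness and exposes exactly which local behaviour of $\h$ drives the $\tfrac{1}{1-\gamma}$ factor (the ratio of second-order coefficients, $a^2$ with $a=\tfrac{1}{1-\gamma}$). You are also more careful than the paper about the boundary $\gamma=\tfrac12$: since $\gamma^\star<\tfrac12$ for every $\lambda>0$, that case always falls into the second branch, and you verify it directly ($\vartheta=2\lambda$ gives ratio $2$), while the paper's phrase ``for all choices of $\gamma$'' glosses over this. Both proofs are sound; yours is the more explicit and slightly more rigorous of the two.
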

\begin{proof}
Note that for $\lambda$ converging zero, $\delta^\star=\hi(1-\lambda)$ approaches $\frac{1}{2}$. This implies $\gamma^\star:=2\delta^\star(1-\delta^\star)=\frac{1}{2}$ and hence for all choices of $\gamma$ we have
\[\vartheta=(1-\gamma)\bigg(1-\h\bigg(\frac{\delta-\frac{\gamma}{2}}{1-\gamma}\bigg)\bigg)\enspace.\]
Now, for this choice of $\vartheta$,  May and Ozerov \cite[Corollary 1]{EC:MayOze15} already showed  the statement of this lemma, by applying L'Hopital's rule twice.
\end{proof}

For convenience, we restate all parameter choices of \cref{alg:new-nn-it} for solving the $\NNP$ in the following overview:

\begin{align}
	\begin{split}
		r&=\frac{d}{\log^2 d},\,  P=(d+1)^{r+1}, \,k=\frac{d}{r} \\
		N&=\frac{d}{\prw}, \textrm{ where } \prw=\binom{\gamma k}{\frac{1}{2}\gamma k}\binom{(1-\gamma) k}{\left(\delta - \frac{\gamma}{2}\right)k}\left(\frac{1}{2}\right)^k\\[+3mm]
		\delta&=\begin{cases} \delta^\star &\textrm{for } \gamma \leq 2\delta^\star(1-\delta^\star) \\
			\frac{1}{2}(1-\sqrt{1-2\gamma})& \textrm{else}
		\end{cases}\enspace,\textrm{ with } \delta^\star:=H^{-1}(1-\lambda) 
	\end{split}
	\label{eqn:all-params}
\end{align}

\section{Different Input Distributions}
\label{sec:distributions}
In this section, we show how to adapt the analysis of \cref{alg:new-nn-it} to variable input distributions. Therefore, we first reformulate \cref{thm:main-qq2} in \cref{cor:arbitrary-dist} for the case of considering the $\NNP$ over an arbitrary distribution $\calD$. As already indicated in the proof of \cref{thm:main-qq2}, this reformulation depends on the expected value $\calE$ of the cost of the naive search at the bottom of the computation tree, which is highly influenced by the distribution $\calD$. Then, we show how to compute $\calE$ and how to upper bound it effectively.
Finally, we give upper bounds for the time complexity of the algorithm to solve the $\NNP$ over some generic distributions. These examples suggest that the algorithm is best suited for distributions $\calD$, where the weight of the sum $\vec v+\vec w$ of elements $\vec v,\vec w\sim\calD$ concentrates at $\frac{d}{2}$.\footnote{This behavior seems quite natural as in this case, the solution is most distinguishable from random input pairs.}

Let us start with the reformulation of the theorem.
\begin{corollary}
	\label{cor:arbitrary-dist}
	Let $\calD$ be some distribution over $\F_2^d$, $\prw$ and $\prb$ be as defined in \cref{eqn:probabilities}, $\gamma\in\left[0,\frac{1}{2}\right] $, $\lambda\in[0, 1]$ and $r=\frac{\lambda d}{\log^2 d}$. Also let $\calE=\E[|A|\cdot|B|]$ for $A$ and $B$ in \linesref{line:naive-search} of \cref{alg:new-nn-it} (where the expectation is taken over the distribution of input lists and the random choices of the algorithm). Then \cref{alg:new-nn-it} solves the $\NNP$ problem over $\calD$ in time
	\[	\max\bigg({\prw^{-r}} ,\frac{2^{\lambda d}\cdot \prb^{r-1}}{\prw^r},\frac{\calE}{\prw^r}\bigg)^{1+o(1)}\
	\]
	with success probability overwhelming in $d$. 
	\label{cor:main-qq2}
\end{corollary}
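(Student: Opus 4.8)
The plan is to revisit the proof of \cref{thm:main-qq2} and observe exactly which parts depended on the uniform input distribution. Scanning that proof, the tree structure (branching factor $N$, depth $r$), the bound $N^i$ on the number of nodes on level $i$, the list sizes $\calL_i = |L_1|\cdot\prb^i$, the construction times $T_i = \tmo{|L_1|\cdot\prb^{i-1}\cdot N^i}$, the choice $N = d/\prw$, $P = (d+1)^{r+1}$, and the entire success-probability argument are all insensitive to $\calD$: they only rely on the definitions of $\prb$ and $\prw$ in \cref{eqn:probabilities} and on the weight $\gamma d$ of the planted pair, which are fixed by the problem statement and the bucketing criterion, not by the list distribution. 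Hence the only place the uniform assumption entered was in bounding the cost $T_{r+1}$ of the naive search at the leaves, where we used $\E[|A_r|\cdot|B_r|] = \calO(\calL_r^2)$ via the independence computation in \cref{eqn:expectation-random}.

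The approach is therefore: repeat the proof of \cref{thm:main-qq2} verbatim up to the point where $T_{r+1}$ is computed, replace the bound $\E[|A_r|\cdot|B_r|] = \calO(\calL_r^2) = \calO(2^{2\lambda d}\prb^{2r})$ with the abstract quantity $\calE := \E[|A|\cdot|B|]$ for the leaf lists $A,B$ in \linesref{line:naive-search}, and carry $\calE$ symbolically through the remaining estimates. Concretely, $T_{r+1} = \tmo{N^r\cdot\calE}$, so in the sum $T' = P\cdot\sum_{i=1}^{r+1}T_i$ the last summand becomes $(d+1)^{r+1}\cdot N^r\cdot\calE = (d+1)^{r+1}\cdot(d/\prw)^r\cdot\calE$, which is $(\calE/\prw^r)^{1+o(1)}$ since $r = o(\lambda d/\log d)$ absorbs the polynomial factors $(d+1)^{r+1}d^r$. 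The other two terms in the maximum, $\prw^{-r}$ (from traversing all $T'' = \tmo{N^r}$ tree nodes) and $2^{\lambda d}\prb^{r-1}/\prw^r$ (from building the intermediate lists), are unchanged because they never referenced the distribution. Taking the maximum of the three gives exactly the claimed bound $\max(\prw^{-r}, 2^{\lambda d}\prb^{r-1}/\prw^r, \calE/\prw^r)^{1+o(1)}$. The success-probability argument ($p_1$ for a good permutation, $p_2^r$ for the planted pair surviving all $r$ levels) is copied unchanged, since $\prw$ is defined via the planted pair's fixed weight $\gamma d$ regardless of how the other list elements are distributed.

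The only genuinely new subtlety — and the step I would flag as the main (minor) obstacle — is justifying that the intermediate-list argument still goes through without the uniform assumption. In the original proof, the recursion $\calL_i = \calL_{i-1}\cdot\prb$ and the identity $\E[|A_{i-1}|\cdot|B_{i-1}|\mid\ldots] = (|A_{i-1}||B_{i-1}|-1)\prb^2 + \prw$ used that $\vec v + \vec z$ is uniform when $\vec z$ is; but in fact the relevant statements only need that $\vec z$ is uniform and chosen independently of the list elements, which holds by construction of the algorithm, so $\p{\wt(\block{\vec v+\vec z}{i}{r}) = \delta k} = \prb$ for \emph{any} fixed $\vec v$. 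Thus the list-size recursion and hence term $(b)$ survive for arbitrary $\calD$; what genuinely fails is only the collapse $\E[|A_r||B_r|] = \calO(\E[|A_r|]\E[|B_r|])$, because for correlated list elements the pairwise events $\wt(\block{\vec v+\vec z}{i}{r}) = \delta k$ and $\wt(\block{\vec w+\vec z}{i}{r}) = \delta k$ need no longer be independent — which is exactly why we keep $\calE$ as an opaque parameter to be bounded separately (the content of the subsequent subsection). So the proof reduces to: (i) note the distribution only appears in $\calE$; (ii) re-derive $T'$ and $T''$ with $\calE$ in place of $\calL_r^2$; (iii) re-run the verbatim success-probability computation. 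I would write it as a two-sentence proof pointing to the proof of \cref{thm:main-qq2}, highlighting only the substitution $\calO(\calL_r^2)\rightsquigarrow\calE$ in the leaf term.

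\begin{proof}
	The proof is identical to that of \cref{thm:main-qq2}, except that the bound $\E[|A_r|\cdot|B_r|] = \calO(\calL_r^2)$ established in \cref{eqn:expectation-random} — which was the only place the uniform distribution of \cref{def:NNS} was used — is no longer available for a general distribution $\calD$. We therefore replace it by the quantity $\calE := \E[|A|\cdot|B|]$ itself. Every other ingredient of the proof of \cref{thm:main-qq2} depends only on the definitions of $\prb$ and $\prw$ in \cref{eqn:probabilities} and on the fixed weight $\gamma d$ of the planted pair, none of which is affected by $\calD$: in particular the tree has branching factor $N = \frac{d}{\prw}$ and depth $r$, the number of level-$i$ nodes is at most $N^i$, the expected intermediate list sizes still satisfy $\calL_i = \calL_{i-1}\cdot\prb = |L_1|\cdot\prb^i$ because $\vec z$ is drawn uniformly and independently of the list elements (so $\p{\wt(\block{\vec v+\vec z}{i}{r}) = \delta k} = \prb$ for every fixed $\vec v$), and the construction time of all level-$i$ nodes is $T_i = \tmo{|L_1|\cdot\prb^{i-1}\cdot N^i}$ for $0 < i \le r$.

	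The naive search at the leaves now costs $T_{r+1} = \tmo{N^r\cdot\calE}$. Summing over levels and over the $P = (d+1)^{r+1}$ permutations exactly as before, and using $r = \frac{\lambda d}{\log^2 d} = o\!\left(\frac{\lambda d}{\log d}\right)$ to absorb the factors $(d+1)^{r+1}(d+1)^{r}$, we obtain
	\[
		T' = P\cdot\sum_{i=1}^{r+1}T_i \le (d+1)^{2r+1}\left(\frac{r\cdot 2^{\lambda d}\cdot\prb^{r-1}}{\prw^r} + \frac{\calE}{\prw^r}\right) = \max\!\left(\frac{2^{\lambda d}\cdot\prb^{r-1}}{\prw^r},\,\frac{\calE}{\prw^r}\right)^{1+o(1)}\enspace,
	\]
	where we used $\frac{\prb}{\prw}\ge 1$ (shown in the proof of \cref{thm:main-qq2}). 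As before, the algorithm must also traverse all $T'' = \tmo{N^r} = \tmo{\prw^{-r}}$ nodes of the tree, so the total expected running time is $\max(T',T'') = \max\!\big(\prw^{-r},\ 2^{\lambda d}\prb^{r-1}\prw^{-r},\ \calE\prw^{-r}\big)^{1+o(1)}$, as claimed.

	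Finally, the success-probability analysis of \cref{thm:main-qq2} carries over verbatim: the probability $p_1 \ge 1 - e^{-d}$ that at least one of the $(d+1)^{r+1}$ permutations is good depends only on $d$ and $r$, and conditioned on a good permutation the planted pair survives each of the $r$ levels with probability $p_2 = 1 - (1-\prw)^{N} \ge 1 - e^{-d}$ by the independent choices of $\vec z$, so $\p{\textrm{success}} \ge p_1\cdot p_2^r \ge 1 - \frac{d}{e^d}$, which is overwhelming in $d$. A Markov argument as in the proof of \cref{thm:main} then boosts the expected running time to one attained with overwhelming probability, and a union bound intersects this with the success event without spoiling either.
\end{proof}
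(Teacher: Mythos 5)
Your proof is correct and takes essentially the same route as the paper's, which simply says the argument follows the proof of \cref{thm:main-qq2} with $T_{r+1}=N^r\cdot\calE$ and a Markov argument for concentration. Your write-up is considerably more detailed than the paper's two-line proof, and your explicit observation that the intermediate-list recursion $\calL_i = \calL_{i-1}\cdot\prb$ survives for arbitrary $\calD$ because $\prb$ depends only on the uniformity of $\vec z$ is a correct and useful clarification the paper leaves implicit.
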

\begin{proof}
	The proof follows along the lines of the proof of \cref{thm:main-qq2}, by observing that $T_{r+1}=N^r\cdot \calE$ and the expected time complexity is again amplified to being obtained with overwhelming probability by using a Markov argument similar to the proof of \cref{thm:main}.
\end{proof}

In the next lemma, we show how to upper bound the value of $\calE$.

\begin{lemma}[Expectation of Naive Search]
	Let $\calD$ be some distribution over $\F_2^d$, $\gamma\in\left[0,\frac{1}{2}\right] $, $\lambda\in[0, 1]$ and $r=\frac{\lambda d}{\log^2 d}$. Also let $\calE=\E[|A|\cdot|B|]$ for $A$ and $B$ in \linesref{line:naive-search} of \cref{alg:new-nn-it} when solving some instance of the $\NNP$ over $\calD$ (where the expectation is taken over the distribution of input lists and the random choices of the algorithm). Then we have
	\[
	\calE \leq 2^{2 \lambda d}\prod_{i=1}^{r}\alpha_i + 4r\cdot2^{\lambda d}\cdot \prb^{r}
	\]
	where $\alpha_i := \underset{\vec v,\vec w\sim\calD}{\mathrm{Pr}}\left[{\wt(\block{\vec v + \vec z}{i}{r}) = \delta k, \wt(\block{\vec w + \vec z}{i}{r})=\delta k}\right]$.
	\label{lem:expectation-exact}
\end{lemma}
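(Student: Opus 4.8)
The plan is to bound $\calE = \E[|A|\cdot|B|]$ by tracking, level by level, the expected value of the product $|A_i|\cdot|B_i|$ of the sizes of the two lists at a node on level $i$ of the computation tree, exactly as in the proof of \cref{thm:main-qq2}, but now without being able to invoke the uniform-distribution identity $\prb^2$ for an arbitrary pair. First I would condition on the lists $A_{i-1},B_{i-1}$ at the parent node and write
\[
\E[|A_i|\cdot|B_i|\mid A_{i-1},B_{i-1}] = \sum_{\vec v\in A_{i-1},\vec w\in B_{i-1}} \p{\wt(\block{\vec v+\vec z}{i}{r})=\delta k,\ \wt(\block{\vec w+\vec z}{i}{r})=\delta k}.
\]
Here I would split the sum into three groups of pairs: (i) pairs $(\vec v,\vec w)$ with $\vec v\neq\vec x$ and $\vec w\neq\vec y$, for which $\vec v,\vec w$ are independent draws from $\calD$ (using the pairwise-independence assumption of \cref{def:NNS}, which carries over to $\calD$), so each contributes exactly $\alpha_i$; (ii) the single ``diagonal'' pair $(\vec x,\vec y)$, which contributes $\prw$; and (iii) the ``mixed'' pairs, i.e.\ those of the form $(\vec x,\vec w)$ with $\vec w\neq\vec y$ or $(\vec v,\vec y)$ with $\vec v\neq\vec x$ — there are at most $|B_{i-1}|+|A_{i-1}|$ of these, and since $\vec x$ (resp.\ $\vec y$) is fixed while the other component is an independent $\calD$-draw, each such term is at most $\p{\wt(\block{\vec w+\vec z}{i}{r})=\delta k}=\prb$ (forgetting the condition on the fixed vector). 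This gives the recursion
\[
\E[|A_i|\cdot|B_i|\mid A_{i-1},B_{i-1}] \le |A_{i-1}|\cdot|B_{i-1}|\cdot\alpha_i + \big(|A_{i-1}|+|B_{i-1}|\big)\cdot\prb + \prw .
\]

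Next I would take expectations, using the Law of Total Expectation together with the facts (established as in \cref{thm:main-qq2}) that $\E[|A_{i-1}|]=\E[|B_{i-1}|]\le 2^{\lambda d}\prb^{\,i-1}$ and $\prw\le\prb$, obtaining a recursion of the shape $\calL_i^{(2)} \le \calL_{i-1}^{(2)}\cdot\alpha_i + 2\cdot 2^{\lambda d}\prb^{i-1}\cdot\prb + \prb$, where $\calL_i^{(2)}:=\E[|A_i|\cdot|B_i|]$. Unrolling this recursion from $i=1$ to $i=r$ with $\calL_0^{(2)} = \E[|L_1|\cdot|L_2|] = 2^{2\lambda d}$, the leading term is $2^{2\lambda d}\prod_{i=1}^r\alpha_i$. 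For the error terms I would note that each of the additive contributions injected at level $j$, namely $O(2^{\lambda d}\prb^{\,j})$, gets multiplied by $\prod_{i=j+1}^r\alpha_i$ on its way down to level $r$; since $\alpha_i\le\prb$ (again by dropping the weight condition on one of the two independent vectors), $\prod_{i=j+1}^r\alpha_i \le \prb^{\,r-j}$, so each such contribution is at most $O(2^{\lambda d}\prb^{\,r})$, and summing the $r$ of them (plus the $r$ copies of the $\prb$ term, which are even smaller) yields a total error of at most $O(r\cdot 2^{\lambda d}\prb^{\,r})$; tracking constants gives the stated $4r\cdot 2^{\lambda d}\prb^{\,r}$.

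The main obstacle is the careful bookkeeping of the ``mixed'' pairs involving exactly one of $\vec x,\vec y$: one must make sure the dependence structure is handled correctly (the fixed vector is not random, the other is an independent $\calD$-draw, and $\vec z$ is fresh on block $i$), and that at every level the count of such pairs is controlled by the current list sizes rather than the original list size, so that after unrolling the whole contribution collapses into the single clean term $O(r\cdot 2^{\lambda d}\prb^{\,r})$ rather than something growing with the tree. A secondary subtlety is justifying that $\prb$ itself is the same on every block $i$ (which holds because $\vec z$ is drawn uniformly and freshly, so \cref{lem:xinT} applies verbatim on each block regardless of $\calD$), and that the $\alpha_i\le\prb$ bound is legitimate; both are immediate once the independence of the fresh block coordinates is spelled out. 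Everything else is the routine unrolling of a linear recursion.
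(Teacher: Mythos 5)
Your proposal follows essentially the same approach as the paper: you condition on $A_{i-1},B_{i-1}$, split the double sum into the $\alpha_i$-group (neither coordinate is the distinguished element), the two "mixed" groups bounded by $\prb$, and the diagonal pair, then take total expectation to obtain the linear recursion $\E[|A_i||B_i|]\le\alpha_i\E[|A_{i-1}||B_{i-1}|]+\prb(\E[|A_{i-1}|]+\E[|B_{i-1}|]+1)$ and unroll it using $\E[|A_{i-1}|]=2^{\lambda d}\prb^{i-1}$ and $\alpha_i\le\prb$. The only cosmetic difference is that you bound the diagonal term first by $\prw$ and then $\prw\le\prb$, whereas the paper bounds it directly by $\prb$; the decomposition, recursion, and unrolling are identical to the paper's proof.
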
 
\begin{proof}
 
Similar to the proof of \cref{thm:main-easy}, let us bound $\E[|A_{i}|\cdot |B_{i}|]$ in terms of $\E[|A_{i-1}|\cdot |B_{i-1}|]$, $\E[|A_{i-1}|]$ and $\E[|B_{i-1}|]$ for each $i$.

\begin{align*}
	\E[|A_{i}|\cdot|B_{i}|\mid A_{i-1}, B_{i-1}]&=\sum_{\substack{\vec v \in A_{i-1}\setminus\{\vec x\} \\ \vec w \in B_{i-1}\setminus\{\vec y\}}}\underbrace{\p{\wt(\block{\vec v + \vec z}{i}{r}) = \delta k, \wt(\block{\vec w + \vec z}{i}{r})=\delta k}}_{=:\alpha_i} \\
	&+\sum_{\vec v \in A_{i-1}}\underbrace{\p{\wt(\block{\vec v + \vec z}{i}{r}) = \delta k, \wt(\block{\vec y + \vec z}{i}{r})=\delta k}}_{\le \prb}\\
	&+\sum_{\vec w \in B_{i-1}}\underbrace{\p{\wt(\block{\vec x + \vec z}{i}{r}) = \delta k, \wt(\block{\vec w + \vec z}{i}{r})=\delta k}}_{\le \prb} \\
	&+\p{\wt(\block{\vec x + \vec z}{i}{r})) = \delta k, \wt(\block{\vec y + \vec z}{i}{r}))=\delta k}\\
	&\leq \alpha_i\cdot |A_{i-1}|\cdot|B_{i-1}|+ \prb\cdot(|A_{i-1}|+ |B_{i-1}| + 1) 
\end{align*}
and hence $\E[|A_{i}|\cdot|B_{i}|] \le \alpha_i\cdot \E[|A_{i-1}|\cdot|B_{i-1}|]+ \prb\cdot(\E[|A_{i-1}|] + \E[|B_{i-1}|] + 1)$. Again, applying this equation successively, we obtain 
\[
\mathcal{E}=\E[|A_{r}|\cdot|B_{r}|]\le 2^{2 \lambda d}\prod_{i=1}^{r}\alpha_i + 4\cdot2^{\lambda d}\cdot\sum_{i=1}^{r}\left(\prod_{j=0}^{i-2}\alpha_{r-j}\right)\prb^{r-i+1} \le 2^{2 \lambda d}\prod_{i=1}^{r}\alpha_i + 4r\cdot2^{\lambda d}\cdot \prb^{r}\enspace.
\]\qed
\end{proof}

While \cref{lem:expectation-exact} gives an upper bound on the required expectation, it is not very handy. In the next lemma, we show how to further bound this expectation and how it affects the running time of the algorithm.

\begin{lemma}[Complexity for Arbitrary Distributions]
	Let $\calD$ be some distribution over $\F_2^d$, $r:=\frac{\lambda d}{\log^2 d}$, $\gamma\in\left[0,\frac{1}{2}\right] $ and $\lambda\in[0, 1]$.   Also let $\calE=\E[|A|\cdot|B|]$ for $A$ and $B$ in \linesref{line:naive-search} of \cref{alg:new-nn-it} when solving some instance of the $\NNP$ over $\calD$ (where the expectation is taken over the distribution of input lists and the random choices of the algorithm). Then \cref{alg:new-nn-it} solves the $\NNP$ over $\calD$ in time
	\[	
	\max\bigg({\prw^{-r}} ,\frac{2^{\lambda d}\cdot \prb^{r-1}}{\prw^r},\frac{2^{\varepsilon d}}{\prw^r}\bigg)^{1+o(1)}\enspace,
	\]
	where  
	\[
	\varepsilon=2 \lambda - \min\limits_{\substack{i\in[r]\\ \eta\in[0,1]}}(1-\eta)\left(1-\h\left(\frac{\delta - \frac{\eta}{2}}{1-\eta}\right)\right) - \frac{r\cdot\log p_{i, \eta k}}{d}
	\]
	with $p_{i, \eta k}:=\p{\wt(\block{\vec v + \vec w}{i}{r})=\eta k}$.
	\label{lem:expectation-symmetric}
\end{lemma}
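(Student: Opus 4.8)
The plan is to combine \cref{cor:arbitrary-dist} with the bound on $\calE$ from \cref{lem:expectation-exact}, so that the only real work is to turn the product $\prod_{i=1}^r\alpha_i$ into the exponential quantity $2^{\varepsilon d}$ claimed in the statement. Recall that \cref{lem:expectation-exact} gives $\calE\le 2^{2\lambda d}\prod_{i=1}^r\alpha_i + 4r\cdot 2^{\lambda d}\prb^r$. Since $\prb\le 1$, the second summand is at most $4r\cdot 2^{\lambda d}\prb^{r-1}$, which is absorbed into the term $\big(2^{\lambda d}\prb^{r-1}/\prw^r\big)^{1+o(1)}$ already present in \cref{cor:arbitrary-dist}. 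Hence it suffices to show $2^{2\lambda d}\prod_{i=1}^r\alpha_i\le 2^{\varepsilon d+o(d)}$, and then plug everything back into \cref{cor:arbitrary-dist}.

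To bound a single factor $\alpha_i=\p{\wt(\block{\vec v+\vec z}{i}{r})=\delta k,\ \wt(\block{\vec w+\vec z}{i}{r})=\delta k}$ for $\vec v,\vec w\sim\calD$ and $\vec z$ uniform, I would condition on the weight of $\vec v+\vec w$ on the $i$-th block. Writing $p_{i,\eta k}=\p{\wt(\block{\vec v+\vec w}{i}{r})=\eta k}$ and using that, conditioned on this event, the restriction of $\vec z$ to $B_{i,r}$ is still uniform and independent of $(\vec v,\vec w)$ restricted to $B_{i,r}$, \cref{lem:xyinT} (applied with block width $k$ and relative distance $\eta$ in place of $\gamma$) yields
\[
\p{\wt(\block{\vec v+\vec z}{i}{r})=\delta k,\ \wt(\block{\vec w+\vec z}{i}{r})=\delta k\ \big|\ \wt(\block{\vec v+\vec w}{i}{r})=\eta k} = \binom{\eta k}{\frac{\eta}{2}k}\binom{(1-\eta)k}{(\delta-\frac{\eta}{2})k}\Bigl(\tfrac12\Bigr)^{k},
\]
which vanishes unless $\delta-\tfrac\eta2\in[0,1-\eta]$. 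By Stirling the right-hand side is $2^{-(1-\eta)(1-\h(\frac{\delta-\eta/2}{1-\eta}))k}\cdot\mathrm{poly}(k)$, so summing over the at most $k+1$ admissible values of $\eta$ gives $\alpha_i\le\mathrm{poly}(k)\cdot\max_{\eta}\ p_{i,\eta k}\cdot 2^{-(1-\eta)(1-\h(\frac{\delta-\eta/2}{1-\eta}))k}$.

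Taking the product over $i\in[r]$ and bounding each factor by the worst case over $i$, I obtain $\prod_{i=1}^r\alpha_i\le\mathrm{poly}(k)^r\cdot\bigl(\max_{i,\eta}\ p_{i,\eta k}\,2^{-(1-\eta)(1-\h(\frac{\delta-\eta/2}{1-\eta}))k}\bigr)^r$. Taking logarithms, multiplying through by $r$, and substituting $k=d/r$ converts the exponent into $\max_{i,\eta}\bigl[r\log p_{i,\eta k}-(1-\eta)(1-\h(\frac{\delta-\eta/2}{1-\eta}))d\bigr]$; adding $2\lambda d$ and dividing by $d$ reproduces exactly $\varepsilon$, where replacing the discrete range $\eta\in\{0,\tfrac1k,\dots,1\}$ by the interval $[0,1]$ only decreases the minimum and hence only increases $\varepsilon$, so the inequality is preserved. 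The accumulated error $r\log(\mathrm{poly}(k))=O(r\log\log d)$ is $o(d)$ because $r=\lambda d/\log^2 d$ and $k=\log^2 d/\lambda$, so altogether $2^{2\lambda d}\prod_{i=1}^r\alpha_i\le 2^{\varepsilon d+o(d)}$. Inserting this into \cref{cor:arbitrary-dist} and absorbing the $\mathrm{poly}$ and $2^{o(d)}$ factors into the $(1+o(1))$ in the exponent yields the claimed running time.

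The main obstacle I expect is the bookkeeping in the final step: one must verify that the many $\mathrm{poly}(k)$ and $2^{o(d)}$ slacks accumulated across the $r$ blocks genuinely remain $2^{o(d)}$, which relies on the specific choice $r=\lambda d/\log^2 d$, and that the passage from a maximum over discrete block weights to a minimum over the interval $[0,1]$ in the definition of $\varepsilon$ is taken in the direction that preserves an \emph{upper} bound on $\calE$. The conditioning argument reducing $\alpha_i$ to \cref{lem:xyinT} is the conceptual core but becomes routine once the independence of $\vec z|_{B_{i,r}}$ from $(\vec v,\vec w)|_{B_{i,r}}$ is made explicit.
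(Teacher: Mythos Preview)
Your proposal is correct and follows essentially the same route as the paper: you combine \cref{cor:arbitrary-dist} with \cref{lem:expectation-exact}, absorb the $4r\cdot 2^{\lambda d}\prb^r$ term into the second entry of the max, condition each $\alpha_i$ on $\wt(\block{\vec v+\vec w}{i}{r})$ to invoke \cref{lem:xyinT}, and then replace the product $\prod_i\alpha_i$ by $(\max_i\alpha_i)^r$. The paper does the last two steps in the opposite order (first bounding the product by $(\max_i\alpha_i)^r$, then conditioning), but this is cosmetic; your extra remarks on the discrete-to-continuous passage for $\eta$ and on the $\mathrm{poly}(k)^r=2^{o(d)}$ bookkeeping are points the paper glosses over, so your write-up is if anything slightly more careful.
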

\begin{proof}
Taking the result for $\mathcal{E}$ from \cref{lem:expectation-exact} and plugging into the run time formula from \cref{cor:arbitrary-dist} we get that the $\NNP$ problem over $\calD$ can be solved with probability overwhelming in $d$ in time 
\[\max\bigg({\prw^{-r}} ,\frac{2^{\lambda d}\cdot \prb^{r-1}}{\prw^r},\frac{\calE}{\prw^r}\bigg)^{1+o(1)}\le\max\bigg({\prw^{-r}} ,\frac{2^{\lambda d}\cdot \prb^{r-1}}{\prw^r},\frac{2^{2 \lambda d}\prod_{i=1}^{r}\alpha_i}{\prw^r}\bigg)^{1+o(1)}\]
since the right summand $\frac{4r\cdot2^{\lambda d}\cdot \prb^{r}}{\prw^r}$ of $\frac{\calE}{\prw^r}$ is asymptotically smaller than the second entry in the $\max$, i.e. $\frac{2^{\lambda d}\cdot \prb^{r-1}}{\prw^r}$. Thus, is suffices to find an easier upper bound for the first summand $S:=2^{2 \lambda d}\prod_{i=1}^{r}\alpha_i$. Remembering $\alpha_i=\p{\wt(\block{\vec v + \vec z}{i}{r}) = \delta k, \wt(\block{\vec w + \vec z}{i}{r})=\delta k}$ we receive

\begin{align*}
	S&\le 2^{2 \lambda d}\cdot \left(\max_{i\in[r]}\alpha_i\right)^r \\
	&=	2^{2 \lambda d}\cdot \left(\max_{i\in[r]}\sum_{j=0}^{k}\prw_{i, j} \cdot \p{\wt(\block{\vec v + \vec w}{i}{r})=j}\right)^r\\
	&\le	2^{2 \lambda d +o(d)}\cdot\left(\max_{i\in[r],\; j\in[k]\cup\{0\}}\prw_{i, j} \cdot \p{\wt(\block{\vec v + \vec w}{i}{r})=j}\right)^r\\
	&=	2^{2 \lambda d +o(d)}\cdot\left(\max_{i\in[r],\;\eta\in[0,1]}\prw_{i, \eta k} \cdot \p{\wt(\block{\vec v + \vec w}{i}{r})=\eta k}\right)^r\enspace,
\end{align*}
where $\prw_{i, \eta k}= \p{\wt(\block{\vec v + \vec z}{i}{r}) = \delta k, \wt(\block{\vec w + \vec z}{i}{r})=\delta k\mid \wt(\block{\vec v + \vec w}{i}{r})=\eta k}$. \cref{lem:xyinT} lets us rewrite this probability as
\[\prw_{i,\eta k}=\binom{\eta k}{\frac{1}{2}\eta k}\binom{(1-\eta) k}{\left(\delta - \frac{\eta}{2}\right)k}\left(\frac{1}{2}\right)^k\le2^{-\left(1-\h\left(\frac{\delta - \frac{\eta}{2}}{1-\eta}\right)\right)(1-\eta)k}\enspace.\]

We end up with

\begin{align*}
	S&\le 2^{2 \lambda d + r\cdot\max\limits_{i\in[r],\;\eta\in[0,1]}-\left(1-\h\left(\frac{\delta - \frac{\eta}{2}}{1-\eta}\right)\right)(1-\eta)k + \log p_{i, \eta k}+ o(d)} \\
	&=	2^{\left(2 \lambda + \max\limits_{i\in[r],\;\eta\in[0,1]}-\left(1-\h\left(\frac{\delta - \frac{\eta}{2}}{1-\eta}\right)\right)(1-\eta) + \frac{r}{d}\cdot\log p_{i, \eta k}\right)d+ o(d)}\\
	&=	2^{\left(2 \lambda - \min\limits_{i\in[r],\;\eta\in[0,1]}\left(1-\h\left(\frac{\delta - \frac{\eta}{2}}{1-\eta}\right)\right)(1-\eta) - \frac{r}{d}\cdot\log p_{i, \eta k}\right)d+ o(d)}\\
\end{align*}

with $p_{i, \eta k}:=\p{\wt(\block{\vec v + \vec w}{i}{r})=\eta k}$, which proves the claim.\qed
\end{proof}

Note that if it further holds that for $\vec v\sim \calD$ each of the $r$ blocks of $\vec v$ is identically distributed we can further simplify the term of $\varepsilon$ from \cref{lem:expectation-symmetric}. In this case, we have $p_{i, \eta k}^r \le \p{\wt(\vec v + \vec w)=\eta d}:=p_{\eta d}$, thus we get 
\[
\varepsilon=2\lambda-\min\limits_{\eta\in[0,1]} (1-\eta)\left(1-\h\left(\frac{\delta - \frac{\eta}{2}}{1-\eta}\right)\right)-\frac{\log p_{\eta d}}{d}\enspace.
\]

Now if we are given an arbitrary distribution $\calD$ we can maximize $\varepsilon$ according to $\eta$. Then we can similar to the proof of \cref{thm:main} derive a value for $\delta$ minimizing the overall time complexity.
\medskip 

We performed this maximization and optimization numerically for some generic input distributions. We considered distributions, where the weight of input vectors is distributed \emph{binomially}, chosen according to a \emph{Poisson} distribution or \emph{fixed} to a specific value. This means, first a weight is sampled according to the chosen distribution and then a vector of that weight is selected uniformly among all vectors of that weight.

The running time of \cref{alg:new-nn-it} for solving the $\NNP$ over the considered distributions seems to be only dependent on the expected weight of vectors contained in the input lists. That means the time complexity for input lists containing random vectors whose weight is either fixed to $\eta d$  or binomially or Poisson distributed with expectation $\eta d$ is equal. This can possibly be explained by the low variance of all these distributions, which implies a high concentration around this expected weight.

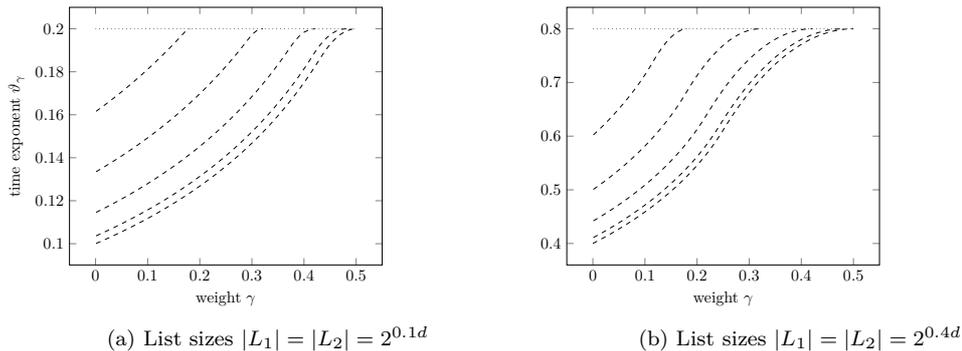
\begin{figure}
	\centering
	\begin{subfigure}{.5\textwidth}
		
		\begin{tikzpicture}[scale=0.6]
			\begin{axis}[
				y tick label style={
					/pgf/number format/.cd,
					fixed,
					precision=3,
					/tikz/.cd
				},
				x tick label style={
					/pgf/number format/.cd,
					fixed,
					1000 sep={},
					precision=2,
					/tikz/.cd
				},
				xlabel={weight $\gamma$},
				ylabel={time exponent $\vartheta_\gamma$},
				legend cell align={left},
				]
				
				\addplot [domain=0:0.5, samples=500,mark=none,black,	mark size=1.8, dotted, opacity=0.8] (x,0.2)  node[pos=0.5,yshift=4pt,opacity=0.8,xshift=-55pt] {};
				
				\pgfplotstableread{plots/plot_e_01}
				\cTradeOff
				\addplot[color=black,dashed] table[x = W,y=T] from \cTradeOff ;
				
				\pgfplotstableread{plots/plot_e_02}
				\cTradeOff
				\addplot[color=black,dashed] table[x = W,y=T] from \cTradeOff ;
				
				\pgfplotstableread{plots/plot_e_03}
				\cTradeOff
				\addplot[color=black,dashed] table[x = W,y=T] from \cTradeOff ;

				\pgfplotstableread{plots/plot_e_04}
				\cTradeOff
				\addplot[color=black,dashed] table[x = W,y=T] from \cTradeOff ;
				
				\pgfplotstableread{plots/plot_e_05}
				\cTradeOff
				\addplot[color=black,dashed] table[x = W,y=T] from \cTradeOff ;
				
			\end{axis}	
			
		\end{tikzpicture}
		\caption{List sizes $|L_1|=|L_2|=2^{0.1 d}$}
		\label{fig:diff-dist1}
	\end{subfigure}%
	\begin{subfigure}{.5\textwidth}
		
		\begin{tikzpicture}[scale=0.6]
			\begin{axis}[
				y tick label style={
					/pgf/number format/.cd,
					fixed,
					precision=3,
					/tikz/.cd
				},
				x tick label style={
					/pgf/number format/.cd,
					fixed,
					1000 sep={},
					precision=2,
					/tikz/.cd
				},
				xlabel={weight $\gamma$},
				ylabel={},
				legend cell align={left},
				]
				
				\addplot [domain=0:0.5, samples=500,mark=none,black,	mark size=1.8, dotted, opacity=0.8] (x,0.8)  node[pos=0.5,yshift=4pt,opacity=0.8,xshift=-55pt] {};
				
				\pgfplotstableread{plots/plot_e_0108}
				\cTradeOff
				\addplot[color=black,dashed] table[x = W,y=T] from \cTradeOff ;
				
				\pgfplotstableread{plots/plot_e_0208}
				\cTradeOff
				\addplot[color=black,dashed] table[x = W,y=T] from \cTradeOff ;
				
				\pgfplotstableread{plots/plot_e_0308}
				\cTradeOff
				\addplot[color=black,dashed] table[x = W,y=T] from \cTradeOff ;

				\pgfplotstableread{plots/plot_e_0408}
				\cTradeOff
				\addplot[color=black,dashed] table[x = W,y=T] from \cTradeOff ;
				
				\pgfplotstableread{plots/plot_e_0508}
				\cTradeOff
				\addplot[color=black,dashed] table[x = W,y=T] from \cTradeOff ;
				
			\end{axis}	
			
		\end{tikzpicture}
		
		\caption{List sizes $|L_1|=|L_2|=2^{0.4 d}$}
		\label{fig:diff-dist2}
	\end{subfigure}
	\caption{Time complexity exponents as a function of the weight of the closest pair for different input list distributions, where the expected weight of input elements is equal to $0.1d$, $0.2d$, $0.3d$, $0.4d$, $0.5d$ from left to right.}
	\label{fig:diff-dist}
\end{figure}

We see in \cref{fig:diff-dist}, that the value for $\gamma$, from where on the complexity becomes quadratic in the lists sizes shifts to the left. This behavior stems from the fact, that the expected weight of a sum of elements is no longer $\frac{d}{2}$, but roughly $2\eta(1-\eta)d$. What also stands out is, that the complexity for $\gamma=0$ is no longer linear in the lists sizes. The reason for this is that the probability of random pairs falling into the same bucket and the probability of the closest pair falling into the same bucket converge for decreasing weight of input list elements. This indicates that for input distributions with smaller expected weight a different bucketing criterion might be beneficial. We pose this as an open question for further research.

\section{Practical Experiments}
In this section, we give experimental results of the performance of a  proof of concept implementation of our new algorithm. These experiments verify the performance gain of our algorithm over a naive quadratic search approach. We also verify the numerical estimates of the algorithm's performance on different input distributions from the previous section and give some practical related improvements to our algorithm. Our implementation is publicly available at \url{https://github.com/FloydZ/NNAlgorithm}.
\smallskip

Before discussing the benchmark results let us first briefly describe some of the practical improvements we introduced in our implementation, which differ from the description in \cref{sec:new-alg}. 
We implemented a true depth-first search rather than the iterative description given previously. The iterative description just allowed for a more convenient analysis. Thus, our algorithm needs to store only the lists of a single path from the root to a leaf node at any time. Also, as all lists of subsequent levels are subsets of previous ones, we do not create $r$ different lists. We rather rearrange the elements of the input list such that elements belonging to the list of the subsequent level are consecutive, making it sufficient to just memorize the range of elements that belong to the next level list. This way, we only need to store the input list plus two integer markers for each level.
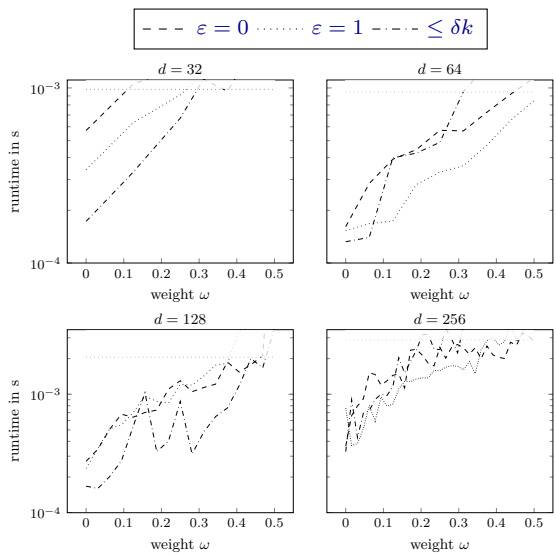
\begin{figure}[ht]
	\centering
	\begin{tikzpicture}[scale=0.63]
		\pgfplotsset{footnotesize}
		\begin{groupplot}[group style = {rows=2, columns=2, horizontal sep=20pt, vertical sep=40pt }, 
				width=0.45\textwidth,
				ymode=log,
				xlabel={weight $\omega$},
				xtick pos=left,
				ymin=0.0001,ymax=0.00111
			]
		\nextgroupplot[ylabel={runtime in s}, title = {$d=32$},
			legend style = { column sep = 1pt, legend columns = -1, legend to name = grouplegend2,}]
			
			\path[name path=axis] (axis cs:0,0.0011) -- (axis cs:0.5,0.0011);
			\addplot+[name path=A, domain=0:0.5, samples=500,mark=none,black, mark size=1.8, dotted, opacity=0.8, forget plot] (x,0.00098446) node[pos=0.5,yshift=4pt,opacity=0.8,xshift=-5pt] {};	
			
			
			\pgfplotstableread{plots/benchmarks/plot_n32_lam10_e0}
			\cTradeOff
			\addplot[color=black,dashed] table[x=W,y=T] from \cTradeOff ;
			\addlegendentry{$\varepsilon = 0$}
			
			\pgfplotstableread{plots/benchmarks/plot_n32_lam10_e1}				
			\cTradeOff
			\addplot[color=black,dotted] table[x=W,y=T] from \cTradeOff ;
			\addlegendentry{$\varepsilon = 1$}
			
			\pgfplotstableread{plots/benchmarks/plot_n32_lam10_allw_alldelta}				
			\cTradeOff
			\addplot[color=black,dashdotted] table[x=W,y=T] from \cTradeOff ;
			\addlegendentry[dashdotted]{$ \leq \delta k$}
			
			\tikzfillbetween[of=A and axis]{white, opacity=0.8};
			
		\nextgroupplot[title = {$d=64$}, yticklabels={,,}, ]
			\path[name path=axis] (axis cs:0,0.0011) -- (axis cs:0.5,0.0011);
			\addplot+[name path=A, domain=0:0.5, samples=500,mark=none,black, mark size=1.8, dotted, opacity=0.8, forget plot] (x,0.00095155)  node[pos=0.5,yshift=4pt,opacity=0.8,xshift=-5pt] {};

			\pgfplotstableread{plots/benchmarks/plot_n64_lam10_e0}
			\cTradeOff
			\addplot[color=black,dashed] table[x=W,y=T] from \cTradeOff ;
			
			\pgfplotstableread{plots/benchmarks/plot_n64_lam10_e1}				
			\cTradeOff
			\addplot[color=black,dotted] table[x=W,y=T] from \cTradeOff ;
			
			\pgfplotstableread{plots/benchmarks/plot_n64_lam10_allw_alldelta}				
			\cTradeOff
			\addplot[color=black,dashdotted] table[x=W,y=T] from \cTradeOff ;
			
			\tikzfillbetween[of=A and axis]{white, opacity=0.8};
			
		\nextgroupplot[ylabel={runtime in s}, title = {$d=128$}, ymin=0.0001,ymax=0.00351, ] 
			
			\path[name path=axis] (axis cs:0,0.0035) -- (axis cs:0.5,0.0035);
			\addplot+[name path=A, domain=0:0.5, samples=500,mark=none,black, mark size=1.8, dotted, opacity=0.8, forget plot] (x,0.002071)  node[pos=0.5,yshift=4pt,opacity=0.8,xshift=-55pt] {};	
			
			\pgfplotstableread{plots/benchmarks/plot_n128_lam10_e0}
			\cTradeOff
			\addplot[color=black,dashed] table[x=W,y=T] from \cTradeOff ;
			
			\pgfplotstableread{plots/benchmarks/plot_n128_lam10_e1}				
			\cTradeOff
			\addplot[color=black,dotted] table[x=W,y=T] from \cTradeOff ;
			
			\pgfplotstableread{plots/benchmarks/plot_n128_lam10_allw_alldelta}				
			\cTradeOff
			\addplot[color=black,dashdotted] table[x=W,y=T] from \cTradeOff ;
			
			\tikzfillbetween[of=A and axis]{white, opacity=0.8};
			
		\nextgroupplot[title = {$d=256$}, yticklabels={,,}, ymin=0.0001,ymax=0.00351, ] 
			
			\path[name path=axis] (axis cs:0,0.0035) -- (axis cs:0.5,0.0035);
			\addplot+[name path=A, domain=0:0.5, samples=500,mark=none,black, mark size=1.8, dotted, opacity=0.8, forget plot] (x,0.00289542)  node[pos=0.5,yshift=4pt,opacity=0.8,xshift=-55pt] {};	
			
			\pgfplotstableread{plots/benchmarks/plot_n256_lam10_e0}
			\cTradeOff
			\addplot[color=black,dashed] table[x=W,y=T] from \cTradeOff ;
			
			\pgfplotstableread{plots/benchmarks/plot_n256_lam10_e1}				
			\cTradeOff
			\addplot[color=black,densely dotted] table[x=W,y=T] from \cTradeOff ;
			
			\pgfplotstableread{plots/benchmarks/plot_n256_lam10_allw_alldelta}				
			\cTradeOff
			\addplot[color=black,dashdotted] table[x=W,y=T] from \cTradeOff ;
			
			\tikzfillbetween[of=A and axis, soft clip={domain=0:1},]{white, opacity=0.8};
		\end{groupplot}
		\node at ($(group c2r1) + (-2.70,3cm)$) {\ref{grouplegend2}}; 
	\end{tikzpicture}
	\caption{Runtime in seconds in logarithmic scale (y-axis) as a function of distance $\omega$ of the closest pair (x-axis) on random lists of size $2^{10}$. Dotted, dashed and dash-dotted lines indicate results for different bucketing, straight horizontal line is the time used by a naive quadratic search.}
\label{fig:diff-benchmark-10}
\end{figure}Also, it turns out that in practice often a small depth of the tree (not exceeding 8 in our experiments) is already sufficient to achieve good runtime results. 
Regarding the branching factor $N$ of the tree, we achieve optimal results either for values close to its expectation $\frac{1}{\prw}$ as given by the analysis or values being significantly smaller. The case of using a very small branching factor can be seen as a pruning strategy, similar to the one used in lattice enumeration algorithms for shortest vector search \cite{aono2018lower}. Additionally, we benchmarked three different strategies for the weight criteria: 
\begin{enumerate}
	\item Strictly enforcing a weight of $\delta k$ in each block, as described in our algorithm.
	\item Allowing for a small deviation $\pm \varepsilon$ around $\delta k$. 
	\item Allowing for weights of \emph{at most} $\delta k$.
\end{enumerate}
Further, we introduced a threshold for the size of the lists in the tree, below which the computation of further leaves is aborted and naive search is used instead.

\cref{fig:diff-benchmark-10} shows the runtime results for the different bucket criteria on small input lists of size $2^{10}$  containing random elements. Here, each data point was averaged over 50 measurements. The experimental results clearly indicate a significant gain over the quadratic search approach. The less significant gain for small dimension $d$ is due to the reduced amount of possible blocks or equivalently the low depth of the computation tree, which lets the algorithm not reach its full potential. In the case of small input lists, we observe that a bucketing strategy that allows a deviation of $\varepsilon=1$ from $\delta k$ is beneficial for most values of $d$.

\begin{figure}[ht]
	\centering
	\begin{tikzpicture}[scale=0.63]
		\pgfplotsset{footnotesize}
		\begin{groupplot}[group style = {rows=2, columns=2, horizontal sep=20pt, vertical sep=40pt }, 
								width=0.45\textwidth,
								ymode=log,
								xlabel={weight $\omega$},
								xtick pos=left,
								ymin=0.008,ymax=5.001
								]
		\nextgroupplot[ylabel={runtime in s}, title = {$d=32$},
			legend style = { column sep = 1pt, legend columns = -1, legend to name = grouplegend1,}]
			
			\path[name path=axis] (axis cs:0,1.5) -- (axis cs:0.5,1.5);
			\addplot+[name path=A, domain=0:0.5, samples=500,mark=none,black, mark size=1.8, dotted, opacity=0.8, forget plot] (x,0.819578) node[pos=0.5,yshift=4pt,opacity=0.8,xshift=-5pt] {};	
			
			
			\pgfplotstableread{plots/benchmarks/plot_n32_lam15_e0}
			\cTradeOff
			\addplot[color=black,dashed] table[x=W,y=T] from \cTradeOff ;
			\addlegendentry{$\varepsilon = 0$}
			
			\pgfplotstableread{plots/benchmarks/plot_n32_lam15_e1}				
			\cTradeOff
			\addplot[color=black,dotted] table[x=W,y=T] from \cTradeOff ;
			\addlegendentry{$\varepsilon = 1$}
			
			\pgfplotstableread{plots/benchmarks/plot_n32_lam15_allw_alldelta}				
			\cTradeOff
			\addplot[color=black,dashdotted] table[x=W,y=T] from \cTradeOff ;
			\addlegendentry[dashdotted]{$ \leq \delta k$}
			
			\tikzfillbetween[of=A and axis]{white, opacity=0.8};
		
		\nextgroupplot[title = {$d=64$}, yticklabels={,,},]
			\path[name path=axis] (axis cs:0,1.5) -- (axis cs:0.5,1.5);
			\addplot+[name path=A, domain=0:0.5, samples=500,mark=none,black, mark size=1.8, dotted, opacity=0.8, forget plot] (x,1.10301)  node[pos=0.5,yshift=4pt,opacity=0.8,xshift=-5pt] {};

			\pgfplotstableread{plots/benchmarks/plot_n64_lam15_e0}
			\cTradeOff
			\addplot[color=black,dashed] table[x=W,y=T] from \cTradeOff ;
			
			\pgfplotstableread{plots/benchmarks/plot_n64_lam15_e1}				
			\cTradeOff
			\addplot[color=black,dotted] table[x=W,y=T] from \cTradeOff ;
			
			\pgfplotstableread{plots/benchmarks/plot_n64_lam15_allw_alldelta}				
			\cTradeOff
			\addplot[color=black,dashdotted] table[x=W,y=T] from \cTradeOff ;
			
			\tikzfillbetween[of=A and axis]{white, opacity=0.8};
		
		\nextgroupplot[ylabel={runtime in s}, title = {$d=128$}, ymin=0.008,ymax=6.0001,] 
			
			\path[name path=axis] (axis cs:0,4) -- (axis cs:0.5,4);
			\addplot+[name path=A, domain=0:0.5, samples=500,mark=none,black, mark size=1.8, dotted, opacity=0.8, forget plot] (x,2.02504)  node[pos=0.5,yshift=4pt,opacity=0.8,xshift=-55pt] {};	
			
			\pgfplotstableread{plots/benchmarks/plot_n128_lam15_e0}
			\cTradeOff
			\addplot[color=black,dashed] table[x=W,y=T] from \cTradeOff ;
			
			\pgfplotstableread{plots/benchmarks/plot_n128_lam15_e1}				
			\cTradeOff
			\addplot[color=black,dotted] table[x=W,y=T] from \cTradeOff ;
			
			\pgfplotstableread{plots/benchmarks/plot_n128_lam15_allw_alldelta}				
			\cTradeOff
			\addplot[color=black,dashdotted] table[x=W,y=T] from \cTradeOff ;
			
			\tikzfillbetween[of=A and axis]{white, opacity=0.8};
			
		\nextgroupplot[title = {$d=256$}, yticklabels={,,}, ymin=0.008,ymax=4.0001, ] 
			
			\path[name path=axis] (axis cs:0,4) -- (axis cs:0.5,4);
			\addplot+[name path=A, domain=0:0.5, samples=500,mark=none,black, mark size=1.8, dotted, opacity=0.8, forget plot] (x,3.06437)  node[pos=0.5,yshift=4pt,opacity=0.8,xshift=-55pt] {};	
			
			\pgfplotstableread{plots/benchmarks/plot_n256_lam15_e0}
			\cTradeOff
			\addplot[color=black,dashed] table[x=W,y=T] from \cTradeOff ;
			
			\pgfplotstableread{plots/benchmarks/plot_n256_lam15_e1}				
			\cTradeOff
			\addplot[color=black,densely dotted] table[x=W,y=T] from \cTradeOff ;
			
			\pgfplotstableread{plots/benchmarks/plot_n256_lam15_allw_alldelta_v2}				
			\cTradeOff
			\addplot[color=black,dashdotted] table[x=W,y=T] from \cTradeOff ;
			
			\tikzfillbetween[of=A and axis, soft clip={domain=0:1},]{white, opacity=0.8};
		\end{groupplot}
		\node at ($(group c2r1) + (-2.70,3cm)$) {\ref{grouplegend1}}; 
	\end{tikzpicture}
		\caption{Runtime in seconds in logarithmic scale (y-axis) as a function of distance $\omega$ of the closest pair (x-axis) on random lists of size $2^{15}$. Dotted, dashed and dash-dotted lines indicate results for different bucketing strategies, straight horizontal line is time used by a naive quadratic search.}
\label{fig:diff-benchmar-15}
\end{figure}
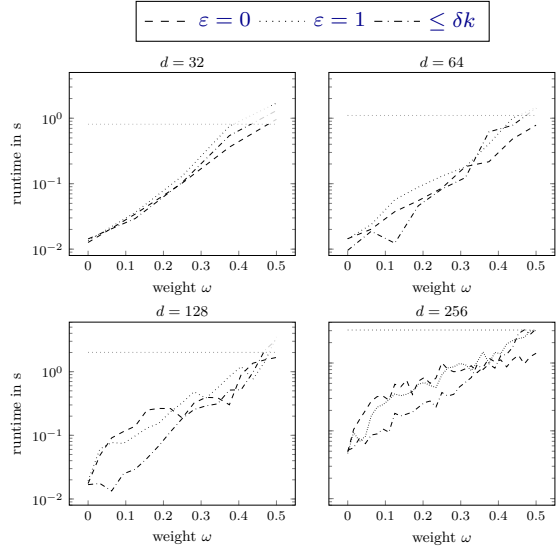 

\cref{fig:diff-benchmar-15} shows the same experiments performed on larger input lists of size $2^{15}$. Besides a more significant improvement over the naive search, we can observe that the bucketing criterion that uses $\delta k$ as an upper bound becomes more beneficial for nearly all values of $\gamma$ and $d$.

Eventually, \cref{fig:diff-benchmark-opt-gamma} shows the experimental runtime results on input lists, whose elements are drawn from a different input distribution, analyzed in \cref{sec:distributions}. Here the distribution is the uniformly random distribution over vectors of weight $\eta d$. One can observe that for growing $d$ the shape of the graph resembles the theoretical results from \cref{fig:diff-dist}. 
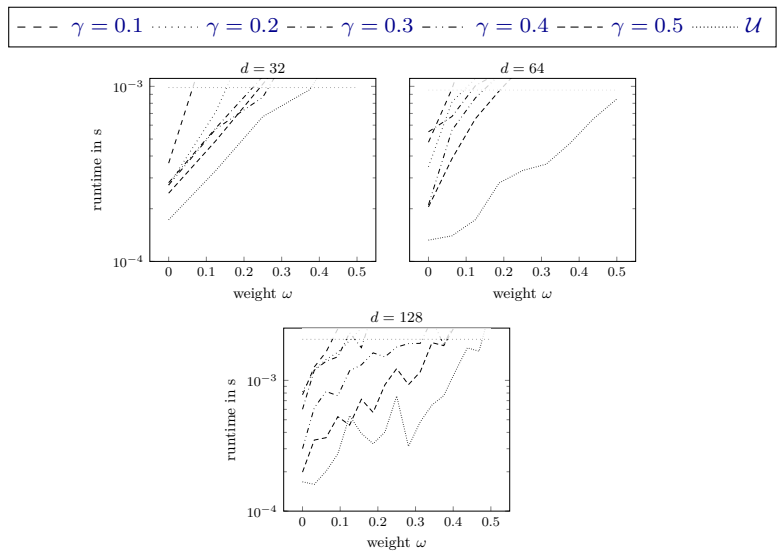
\begin{figure}[H]
	\centering
\begin{tikzpicture}[scale=0.63]
	\pgfplotsset{footnotesize}
	\begin{groupplot}[group style = {rows=2, columns=2, horizontal sep=20pt, vertical sep=40pt }, 
									width=0.45\textwidth,
									ymode=log,
									xlabel={weight $\omega$},
									xtick pos=left,
									ymin=0.0001,
									ymax=0.00111]
		\nextgroupplot[ylabel={runtime in s}, title = {$d=32$},
				legend style = { column sep = 1pt, legend columns = -1, legend to name = grouplegend3,}]

				\path[name path=axis] (axis cs:0,0.0011) -- (axis cs:0.5,0.0011);
				\addplot+[name path=A, domain=0:0.5, samples=500,mark=none,black, mark size=1.8, dotted, opacity=0.8, forget plot] (x,0.00098446)  node[pos=0.5,yshift=4pt,opacity=0.8,xshift=-55pt] {};	
				
				\pgfplotstableread{plots/benchmarks/g/32/plot_opt_n32_lam10_gamma0.1}
				\cTradeOff
				\addplot[color=black,dashed] table[x=W,y=T] from \cTradeOff ;
				\addlegendentry{$\gamma = 0.1$}
				
				\pgfplotstableread{plots/benchmarks/g/32/plot_opt_n32_lam10_gamma0.2}				
				\cTradeOff
				\addplot[color=black,dotted] table[x=W,y=T] from \cTradeOff ;
				\addlegendentry{$\gamma = 0.2$}
				
				\pgfplotstableread{plots/benchmarks/g/32/plot_opt_n32_lam10_gamma0.3}
				\cTradeOff
				\addplot[color=black,dashdotted] table[x=W,y=T] from \cTradeOff ;
				\addlegendentry{$\gamma = 0.3$}
				
				\pgfplotstableread{plots/benchmarks/g/32/plot_opt_n32_lam10_gamma0.4}				
				\cTradeOff
				\addplot[color=black,dashdotdotted] table[x=W,y=T] from \cTradeOff ;
				\addlegendentry{$\gamma = 0.4$}
				
				\pgfplotstableread{plots/benchmarks/g/32/plot_opt_n32_lam10_gamma0.5}
				\cTradeOff
				\addplot[color=black,densely dashed] table[x=W,y=T] from \cTradeOff ;
				\addlegendentry{$\gamma = 0.5$}
				
				\pgfplotstableread{plots/benchmarks/plot_opt_n32_lam10}
				\cTradeOff
				\addplot[color=black,densely dotted] table[x=W,y=T] from \cTradeOff ;
				\addlegendentry{$\mathcal{U}$}
				
				\tikzfillbetween[of=A and axis, soft clip={domain=0:1},]{white, opacity=0.8}

		\nextgroupplot[title = {$d=64$}, yticklabels={,,}, ]
				\path[name path=axis] (axis cs:0,0.0011) -- (axis cs:0.5,0.0011);
				\addplot+[name path=A, domain=0:0.5, samples=500,mark=none,black, mark size=1.8, dotted, opacity=0.8, forget plot] (x,0.00095155)  node[pos=0.5,yshift=4pt,opacity=0.8,xshift=-55pt] {};	
				
				\pgfplotstableread{plots/benchmarks/g/64/plot_opt_n64_lam10_gamma0.1_v2}
				\cTradeOff
				\addplot[color=black,dashed] table[x=W,y=T] from \cTradeOff ;
				
				\pgfplotstableread{plots/benchmarks/g/64/plot_opt_n64_lam10_gamma0.2_v2}				
				\cTradeOff
				\addplot[color=black,dotted] table[x=W,y=T] from \cTradeOff ;
				
				\pgfplotstableread{plots/benchmarks/g/64/plot_opt_n64_lam10_gamma0.3_v2}
				\cTradeOff
				\addplot[color=black,dashdotted] table[x=W,y=T] from \cTradeOff ;
				
				\pgfplotstableread{plots/benchmarks/g/64/plot_opt_n64_lam10_gamma0.4_v2}				
				\cTradeOff
				\addplot[color=black,dashdotdotted] table[x=W,y=T] from \cTradeOff ;
				
				\pgfplotstableread{plots/benchmarks/g/64/plot_opt_n64_lam10_gamma0.5_v2}
				\cTradeOff
				\addplot[color=black,densely dashed] table[x=W,y=T] from \cTradeOff ;
				
				\pgfplotstableread{plots/benchmarks/plot_opt_n64_lam10}
				\cTradeOff
				\addplot[color=black,densely dotted] table[x=W,y=T] from \cTradeOff ;

				\tikzfillbetween[of=A and axis]{white, opacity=0.8};
		\nextgroupplot[ylabel={runtime in s}, title = {$d=128$}, xshift=1.1in, ymin=0.0001, ymax=0.00251, ] 

				\path[name path=axis] (axis cs:0,0.0025) -- (axis cs:0.5,0.0025);
				\addplot+[name path=A, domain=0:0.5, samples=500,mark=none,black, mark size=1.8, dotted, opacity=0.8, forget plot] (x,0.002071)  node[pos=0.5,yshift=4pt,opacity=0.8,xshift=-55pt] {};	
				
				\pgfplotstableread{plots/benchmarks/g/128/plot_opt_n128_lam10_gamma0.1}
				\cTradeOff
				\addplot[color=black,dashed] table[x=W,y=T] from \cTradeOff ;
				
				\pgfplotstableread{plots/benchmarks/g/128/plot_opt_n128_lam10_gamma0.2}				
				\cTradeOff
				\addplot[color=black,dotted] table[x=W,y=T] from \cTradeOff ;
				
				\pgfplotstableread{plots/benchmarks/g/128/plot_opt_n128_lam10_gamma0.3}
				\cTradeOff
				\addplot[color=black,dashdotted] table[x=W,y=T] from \cTradeOff ;
				
				\pgfplotstableread{plots/benchmarks/g/128/plot_opt_n128_lam10_gamma0.4}				
				\cTradeOff
				\addplot[color=black,dashdotdotted] table[x=W,y=T] from \cTradeOff ;
				
				\pgfplotstableread{plots/benchmarks/g/128/plot_opt_n128_lam10_gamma0.5}
				\cTradeOff
				\addplot[color=black,densely dashed] table[x=W,y=T] from \cTradeOff ;
				
				\pgfplotstableread{plots/benchmarks/plot_opt_n128_lam10_v2}
				\cTradeOff
				\addplot[color=black,densely dotted] table[x=W,y=T] from \cTradeOff ;
			
				\tikzfillbetween[of=A and axis, soft clip={domain=0:1},]{white, opacity=0.8};
	\end{groupplot}
	\node at ($(group c2r1) + (-2.72,3cm)$) {\ref{grouplegend3}}; 
\end{tikzpicture}
	\caption{Runtime in seconds in logarithmic scale (y-axis) as a function of distance $\omega$ of the closest pair (x-axis) on lists of size $2^{10}$ containing random elements of weight $\gamma d$. The densely dashed line ($\calU$) indicates the runtime on uniformly random lists.}
\label{fig:diff-benchmark-opt-gamma}
\end{figure}

\bibliographystyle{splncs03}
\bibliography{cryptobib/abbrev0,cryptobib/crypto,cryptobib/manual}

\begin{thebibliography}{10}

\bibitem{DBLP:conf/soda/Alman19}
Josh Alman.
\newblock An illuminating algorithm for the light bulb problem.
\newblock In Jeremy~T. Fineman and Michael Mitzenmacher, editors, {\em 2nd
  Symposium on Simplicity in Algorithms, SOSA@SODA 2019, January 8-9, 2019 -
  San Diego, CA, {USA}}, volume~69 of {\em {OASICS}}, pages 2:1--2:11. Schloss
  Dagstuhl - Leibniz-Zentrum f{\"{u}}r Informatik, 2019.
\newblock \href {https://doi.org/10.4230/OASIcs.SOSA.2019.2}
  {\path{doi:10.4230/OASIcs.SOSA.2019.2}}.

\bibitem{andoni2015optimal}
Alexandr Andoni and Ilya Razenshteyn.
\newblock Optimal data-dependent hashing for approximate near neighbors.
\newblock In {\em Proceedings of the forty-seventh annual ACM symposium on
  Theory of computing}, pages 793--801, 2015.

\bibitem{aono2018lower}
Yoshinori Aono, Phong~Q Nguyen, Takenobu Seito, and Junji Shikata.
\newblock Lower bounds on lattice enumeration with extreme pruning.
\newblock In {\em Annual International Cryptology Conference}, pages 608--637.
  Springer, 2018.

\bibitem{SODA:BDGL16}
Anja Becker, L{\'e}o Ducas, Nicolas Gama, and Thijs Laarhoven.
\newblock New directions in nearest neighbor searching with applications to
  lattice sieving.
\newblock In Robert Krauthgamer, editor, {\em 27th Annual {ACM}-{SIAM}
  Symposium on Discrete Algorithms}, pages 10--24, Arlington, VA, USA,
  January~10--12, 2016. {ACM-SIAM}.
\newblock \href {https://doi.org/10.1137/1.9781611974331.ch2}
  {\path{doi:10.1137/1.9781611974331.ch2}}.

\bibitem{bentley1980multidimensional}
Jon~Louis Bentley.
\newblock Multidimensional divide-and-conquer.
\newblock {\em Communications of the ACM}, 23(4):214--229, 1980.

\bibitem{PQCRYPTO:BotMay18}
Leif Both and Alexander May.
\newblock Decoding linear codes with high error rate and its impact for {LPN}
  security.
\newblock In Tanja Lange and Rainer Steinwandt, editors, {\em Post-Quantum
  Cryptography - 9th International Conference, PQCrypto 2018}, pages 25--46,
  Fort Lauderdale, Florida, United States, April~9--11 2018. Springer,
  Heidelberg, Germany.
\newblock \href {https://doi.org/10.1007/978-3-319-79063-3_2}
  {\path{doi:10.1007/978-3-319-79063-3_2}}.

\bibitem{both2018decoding}
Leif Both and Alexander May.
\newblock Decoding linear codes with high error rate and its impact for lpn
  security.
\newblock In {\em International Conference on Post-Quantum Cryptography}, pages
  25--46. Springer, 2018.

\bibitem{calonder2010brief}
Michael Calonder, Vincent Lepetit, Christoph Strecha, and Pascal Fua.
\newblock Brief: Binary robust independent elementary features.
\newblock In {\em European conference on computer vision}, pages 778--792.
  Springer, 2010.

\bibitem{dubiner2010bucketing}
Moshe Dubiner.
\newblock Bucketing coding and information theory for the statistical
  high-dimensional nearest-neighbor problem.
\newblock {\em IEEE Transactions on Information Theory}, 56(8):4166--4179,
  2010.

\bibitem{gueye2017generalization}
Cheikh~Thi{\'e}coumba Gueye, Jean~Belo Klamti, and Shoichi Hirose.
\newblock Generalization of bjmm-isd using may-ozerov nearest neighbor
  algorithm over an arbitrary finite field $\mathbb{F}_q$.
\newblock In {\em International Conference on Codes, Cryptology, and
  Information Security}, pages 96--109. Springer, 2017.

\bibitem{hirose2016may}
Shoichi Hirose.
\newblock May-ozerov algorithm for nearest-neighbor problem over $\mathbb{F}_q$
  and its application to information set decoding.
\newblock In {\em International Conference for Information Technology and
  Communications}, pages 115--126. Springer, 2016.

\bibitem{STOC:IndMot98}
Piotr Indyk and Rajeev Motwani.
\newblock Approximate nearest neighbors: Towards removing the curse of
  dimensionality.
\newblock In {\em 30th Annual {ACM} Symposium on Theory of Computing}, pages
  604--613, Dallas, TX, USA, May~23--26, 1998. {ACM} Press.
\newblock \href {https://doi.org/10.1145/276698.276876}
  {\path{doi:10.1145/276698.276876}}.

\bibitem{SODA:KarKasKoh16}
Matti Karppa, Petteri Kaski, and Jukka Kohonen.
\newblock A faster subquadratic algorithm for finding outlier correlations.
\newblock In Robert Krauthgamer, editor, {\em 27th Annual {ACM}-{SIAM}
  Symposium on Discrete Algorithms}, pages 1288--1305, Arlington, VA, USA,
  January~10--12, 2016. {ACM-SIAM}.
\newblock \href {https://doi.org/10.1137/1.9781611974331.ch90}
  {\path{doi:10.1137/1.9781611974331.ch90}}.

\bibitem{khuller1995simple}
Samir Khuller and Yossi Matias.
\newblock A simple randomized sieve algorithm for the closest-pair problem.
\newblock {\em Information and Computation}, 118(1):34--37, 1995.

\bibitem{lu2015learning}
Jiwen Lu, Venice~Erin Liong, Xiuzhuang Zhou, and Jie Zhou.
\newblock Learning compact binary face descriptor for face recognition.
\newblock {\em IEEE transactions on pattern analysis and machine intelligence},
  37(10):2041--2056, 2015.

\bibitem{marchini2005genome}
Jonathan Marchini, Peter Donnelly, and Lon~R Cardon.
\newblock Genome-wide strategies for detecting multiple loci that influence
  complex diseases.
\newblock {\em Nature genetics}, 37(4):413--417, 2005.

\bibitem{EC:MayOze15}
Alexander May and Ilya Ozerov.
\newblock On computing nearest neighbors with applications to decoding of
  binary linear codes.
\newblock In Elisabeth Oswald and Marc Fischlin, editors, {\em Advances in
  Cryptology -- {EUROCRYPT}~2015, Part~I}, volume 9056 of {\em Lecture Notes in
  Computer Science}, pages 203--228, Sofia, Bulgaria, April~26--30, 2015.
  Springer, Heidelberg, Germany.
\newblock \href {https://doi.org/10.1007/978-3-662-46800-5_9}
  {\path{doi:10.1007/978-3-662-46800-5_9}}.

\bibitem{motwani2006lower}
Rajeev Motwani, Assaf Naor, and Rina Panigrahi.
\newblock Lower bounds on locality sensitive hashing.
\newblock In {\em Proceedings of the twenty-second annual symposium on
  Computational geometry}, pages 154--157, 2006.

\bibitem{musani2007detection}
Solomon~K Musani, Daniel Shriner, Nianjun Liu, Rui Feng, Christopher~S Coffey,
  Nengjun Yi, Hemant~K Tiwari, and David~B Allison.
\newblock Detection of gene$\times$ gene interactions in genome-wide
  association studies of human population data.
\newblock {\em Human heredity}, 63(2):67--84, 2007.

\bibitem{strecha2011ldahash}
Christoph Strecha, Alex Bronstein, Michael Bronstein, and Pascal Fua.
\newblock Ldahash: Improved matching with smaller descriptors.
\newblock {\em IEEE transactions on pattern analysis and machine intelligence},
  34(1):66--78, 2011.

\bibitem{valiant2012finding}
Gregory Valiant.
\newblock Finding correlations in subquadratic time, with applications to
  learning parities and juntas.
\newblock In {\em 2012 IEEE 53rd Annual Symposium on Foundations of Computer
  Science}, pages 11--20. IEEE, 2012.

\bibitem{valiant1988functionality}
Leslie~G Valiant.
\newblock Functionality in neural nets.
\newblock In {\em COLT}, volume~88, pages 28--39, 1988.

\bibitem{xie2019new}
Ning Xie, Shuai Xu, and Yekun Xu.
\newblock A new coding-based algorithm for finding closest pair of vectors.
\newblock {\em Theoretical Computer Science}, 782:129--144, 2019.

\end{thebibliography}

%
\end{document}